\newcommand{\eps}{\varepsilon}
\renewcommand{\Pr}{\ensuremath{\textsf{Pr}}}
\renewcommand{\b}[1]{\ensuremath{\mathbb{#1}}}
\newcommand{\wid}{\ensuremath{\textsf{wid}}}
\newcommand{\E}{\mathbf{E}}
\renewcommand{\Pr}{\mathbf{Pr}}
\newcommand{\abs}[1]{\left| #1 \right|}
\definecolor{orange}{rgb}{1,0.5,0}
\newcommand{\ED}{\mathrm{ED}}
\renewcommand{\paragraph}[1]{\vspace{2mm}\noindent{\bf #1}}
\newcommand{\XOR}{\textsf{XOR}\xspace}
\newcommand{\EQ}{\textsf{EQ}\xspace}
\newcommand{\OR}{\textsf{OR}\xspace}
\newcommand{\ORb}{\textsf{OR-board}\xspace}
\newcommand{\AND}{\textsf{AND}\xspace}
\newcommand{\MAJ}{\textsf{MAJ}\xspace}
\newcommand{\BITS}{\textsf{BITS}\xspace}
\newcommand{\CONN}{\textsf{CONN}\xspace}
\newcommand{\DISJ}{\textsf{DISJ}\xspace}
\newcommand{\HH}{\textsf{HH}\xspace}
\begin{document}

\title{\Large Lower Bounds for Number-in-Hand Multiparty Communication Complexity, Made Easy\thanks{A preliminary version of the paper was presented at the ACM-SIAM Symposium on Discrete Algorithms, January 2012}  }
\author{Jeff M. Phillips~\thanks{Jeff M. Phillips's work on this paper was/is supported by a subaward to the University of Utah under NSF award 0937060 to the Computing Research Association and NSF CCF-1350888, IIS-1251019, and ACI-1443046.} \\ School of Computing \\ University of Utah \\ jeffp@cs.utah.edu \and Elad Verbin~\thanks{Elad Verbin acknowledges support from the Danish National Research Foundation and the National Science Foundation of China (under the grant 61061130540) for the Sino-Danish Center for the Theory of Interactive
Computation, within which part of this work was performed. } \\ elad.verbin@gmail.com \and Qin Zhang~\thanks{Corresponding author. Most of the work was done while Qin Zhang was a postdoc in MADALGO, Aarhus University.} \\ Indiana University Bloomington  \\ qzhangcs@indiana.edu}
\date{}

\maketitle

%\pagenumbering{arabic}
%\setcounter{page}{1}%Leave this line commented out.

\begin{abstract} 
In this paper we prove lower bounds on randomized multiparty communication complexity, both in the \emph{blackboard model} (where each message is written on a blackboard for all players to see) and (mainly) in the \emph{message-passing model}, where messages are sent player-to-player. We introduce a new technique for proving such bounds, called \emph{symmetrization}, which is natural, intuitive, and often easy to use.

For example, for the problem where each of $k$ players gets a bit-vector of length $n$, and the goal is to compute the coordinate-wise XOR of these vectors, we prove a tight lower bounds of $\Omega(nk)$ in the blackboard model. For the same problem with AND instead of XOR, we prove a lower bounds of roughly $\Omega(nk)$ in the message-passing model (assuming $k \le n/3200$) and $\Omega(n \log k)$ in the blackboard model. We also prove lower bounds for bit-wise majority, for a graph-connectivity problem, and for other problems; the technique seems applicable to a wide range of other problems as well. All of our lower bounds allow randomized communication protocols with two-sided error.

We also use the symmetrization technique to prove several direct-sum-like results for multiparty communication.
\end{abstract}

\section{Introduction}
\label{sec:intro}
%Part 1: introduction, explanation of why the model is important, why
%it's not trivial given known work, what cool stuff it proves, what
%are the unique phenomena in this model, and some warmup bounds
%(without symmetrization, just easy bounds, e.g. Jeff's Omega(k) lower
%bound for n=2). Also a comparison to known work and what this means
%for streaming, sketching and tracking, both in corollaries and in the
%sense of intuitive connection.

%\elad{important: we have a small bug!!! We need to lose a factor of logk in the simulation, since when Alice talks to Bob she needs to tell him what player the message was supposed to be sent to. This does not happen in the coordinator model, so maybe we should just use the coordinator model. Note that this means the coordinator model is not equivalent to the message-passing model, just because the message-passing model is silly that way. For streaming, players only send messages to a pre-known individual, so no logk factor arises}

%\elad{add generalized inner product. why isn't it in there already? important!}

In this work we consider multiparty communication complexity in the
\emph{number-in-hand} model. In this model, there are $k$ players $\{p_1, \ldots, p_k\}$, each with his own $n$-bit input $x_i \in \{0,1\}^n$. The players wish to collaborate in order to compute a joint function of their inputs, $f(x_1, \ldots, x_k)$. To do so, they are allowed to communicate, until one of them figures out the value of $f(x_1, \ldots, x_k)$ and returns it. All players are assumed to have unlimited computational power, so all we care about is the amount of communication used. There are three variants to this model, according to the mode of communication:
\begin{enumerate}
 \item the \emph{blackboard model}, where any message sent by a player is written on a blackboard visible to all players;
 \item the \emph{message-passing model}, where a player $p_i$ sending a message specifies another player $p_j$ that will receive this message;
 \item the \emph{coordinator model}, where there is an additional $(k+1)$-th player called the \emph{coordinator}, who receives no input. Players can only communicate with the coordinator, and not with each other directly.
\end{enumerate}
We will work in all of these, but will mostly concentrate on the message-passing model and the coordinator model. Note that the coordinator model is almost equivalent to the message-passing model, up to a $\log k$ multiplicative factor, since instead of player $i$ sending message $x$ to player $j$, player $i$ can transmit message $(j,x)$ to the coordinator, and the coordinator forwards it to player $j$.

Lower bounds in the three models above are useful for proving lower bounds on the space usage of streaming algorithms, and for other models as well, as we explain in Section \ref{subsec:motivation}. Most previous lower bounds have been proved in the blackboard model, but lower bounds in the message-passing model and the coordinator model can potentially give higher bounds for all the applications. 
%\qin{though we haven't found any ...}

Note that another, entirely different, model for multiparty communication is the \emph{number-on-forehead} model, where each player can see the inputs of all other players but \emph{not} his own input. This model has important applications for circuit complexity (see e.g.~\cite{KN97}). We do not discuss this model in this paper.

We allow all protocols to be randomized, with public coins, i.e.\ all players have unlimited access to a common infinite string of independent random bits. We allow the protocol to return the wrong answer with probability $\eps$ (which should usually be thought of as a small constant); here, the probability is taken over the sample space of public randomness. Note that the public coin model might seem overly powerful, but in this paper we are mainly interested in proving lower bounds rather than upper bounds, so giving the model such strength only makes our results stronger.

For more on communication complexity, see the book of Kushilevitz and Nisan~\cite{KN97}, and the references therein. We give some more definitions in the preliminaries, in Section \ref{sec:pre}.

\subsection{Warm-Up}

We begin by sketching two lower bounds obtained using our symmetrization technique, both of them for the coordinate-wise $k$-\XOR problem. These lower bounds can be proved without using symmetrization, but their proofs that use symmetrization are particularly appealing.

First consider the following problem: Each player gets a bitvector $x_i \in \{0,1\}^n$ and the goal is to compute the coordinate-wise XOR of these vectors. We operate in the \emph{blackboard model}, where messages are posted on a blackboard for all to see.

\begin{theorem} \label{thm:XOR_bb_intro}
The coordinate-wise $k$-\XOR problem requires communication $\Omega(nk)$ in the blackboard model.
\end{theorem}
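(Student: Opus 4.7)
The plan is to use symmetrization to reduce the $k$-party XOR problem to $k$ instances of $2$-party XOR, one per player. Fix the product distribution $\mu$ where each $x_i \in \{0,1\}^n$ is independent and uniform, and write $C_i$ for the expected number of blackboard bits contributed by $p_i$ when the protocol $\pi$ is run on $\mu$, so the total expected communication is $C = \sum_i C_i$. I may assume $\pi$ has a single fixed output player, say $p_1$: if the output player depends on the inputs, the player who knows the answer can broadcast it on the blackboard ($+n$ extra bits) and designate $p_1$, which perturbs $C$ by an additive $n$ that will be harmlessly absorbed at the end.

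For each $i \in \{2,\ldots,k\}$ I construct a $2$-party protocol for the $n$-bit XOR function as follows. Alice receives $a \in \{0,1\}^n$ and Bob receives $b \in \{0,1\}^n$; using public randomness they sample independent uniform $y_j \in \{0,1\}^n$ for $j \ne i$ subject to $\bigoplus_{j \ne i} y_j = b$. Alice then plays $p_i$ on input $a$ and Bob plays the remaining $k-1$ players on inputs $(y_j)$, simulating the blackboard of $\pi$ by having every bit posted by $p_i$ sent from Alice to Bob and every bit posted by any other player sent from Bob to Alice. Under this simulation the inputs $(a,b)$ are independent and uniform, Bob (who plays $p_1$) is the output player, and Alice's expected $2$-party communication is exactly $C_i$.

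I then invoke the following one-sided lemma: in any $2$-party protocol for $n$-bit XOR on independent uniform inputs in which Bob is the output player, Alice sends $\Omega(n)$ bits in expectation. The proof is a standard information-theoretic calculation — because Bob's output is a function of $(b,T,R)$ that equals $a \oplus b$ with constant probability, Fano gives $H(a \mid b, T, R) \le \eps n + 1$, so $I(a; T \mid b, R) \ge (1-\eps)n - 1$; and since the transcript $T$ is determined by $T_A$ once $(b,R)$ is fixed, this mutual information is upper bounded by the expected length of $T_A$. Applying the lemma for each $i \ge 2$ yields $C_i = \Omega(n)$, and summing gives $C \ge (k-1)\,\Omega(n) - n = \Omega(nk)$, as required. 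The main subtle point is that the $2$-party XOR lemma must be \emph{one-sided}: a symmetric $\Omega(n)$ bound on total $2$-party communication would give only $\Omega(n)$ per simulation and hence lose the factor of $k$, so the key is that pinning the output role on Bob's side forces the full $\Omega(n)$ cost onto Alice.
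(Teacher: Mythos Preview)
Your proof is correct but takes a genuinely different route from the paper's.

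The paper's blackboard argument has Alice and Bob each simulate \emph{one} randomly chosen player (indices $i$ and $j$), with all remaining $k-2$ inputs drawn from shared public randomness so that both parties can compute those players' blackboard messages for free. Symmetry of the uniform product distribution then gives that the expected two-party cost is $2C/k$, and since at the end Alice knows the global XOR and all inputs except $x_j$, she can reconstruct Bob's entire input; an information bound yields $2C/k \ge \Omega(n)$.

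You instead fix each $i \ge 2$ in turn, have Alice play $p_i$ while Bob plays all $k-1$ others, and only charge Alice for the bits $p_i$ posts, identifying this with $C_i$. Your one-sided Fano lemma (Alice must send $\Omega(n)$ bits for Bob to output $a\oplus b$) then gives $C_i \ge \Omega(n)$ for each $i$ individually, and you sum. This avoids the random choice of players and the explicit appeal to symmetry; the price is that you need the asymmetric ``Alice's bits only'' lower bound rather than a bound on total two-party communication, and you must route all other players' messages through Bob (which you correctly do not charge for). Both arguments are clean for this problem. The paper's version is the one it is showcasing as a first instance of the symmetrization template (random embedding of two players into a symmetric hard distribution), whereas yours is closer in spirit to a direct cut/information argument that every non-output player must personally write $\Omega(n)$ bits on the board --- and in fact is more reminiscent of the paper's \emph{coordinator-model} reduction than of its blackboard one.
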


To see this, first let us specify the \emph{hard distribution}: we prove the lower bound when the input is drawn from this distribution, and by the easy direction of Yao's Minimax Lemma (see e.g.\ \cite{KN97}), it follows that this lower bound applies for the problem as a whole. The hard distribution we choose is just the distribution where the inputs are independently drawn from the uniform distribution.

To prove the lower bound, consider a protocol $P$ for this $k$-player problem, which works on this distribution, communicates $C(P)$ bits in expectation, and suppose for now that it never makes any errors (it will be easy to remove this assumption). We build from $P$ a new protocol $P'$ for a $2$-player problem. In the $2$-player problem, suppose that Alice gets input $x$ and Bob gets input $y$, where $x,y \in \{0,1\}^n$ are independent random bitvectors. Then $P'$ works as follows: Alice and Bob randomly choose two distinct indices $i,j \in \{1,\ldots,k\}$ using the public randomness, and they simulate the protocol $P$, where Alice plays player $i$ and lets $x_i=x$, Bob plays player $j$ and lets $x_j=y$, and they both play all of the rest of the players; the inputs of the rest of the players is chosen from shared randomness. Alice and Bob begin simulating the running of $P$. Every time player $i$ should speak, Alice sends to Bob the message that player $i$ was supposed to write on the board, and vice versa. When any other player $p_r$ ($r \neq i,j$) should speak, both Alice and Bob know his input so they know what he should be writing on the board, thus no communication is actually needed (this is the key point of the symmetrization technique). A key observation is that the inputs of the $k$ players are uniform and independent and thus entirely symmetrical,\footnote{Here and throughout the paper, \emph{symmetric} means that the inputs are drawn from a distribution where renaming the players does not change the distribution. Namely, a distribution $D$ over $X^n$ is called \emph{symmetric} if exchanging any two coordinates in $D$ keeps the distribution the same.} and since the indices $i$ and $j$ were chosen uniformly at random, then the expected communication performed by the protocol $P'$ is $\E [C(P')] = 2C(P)/k$. Furthermore, when the players have finished simulating $P'$, Alice now knows the bit-wise XOR of all the vectors $x_1,\ldots,x_k$, from which she can easily reconstruct the vector $x_j$ (since she already knows all the other vectors $x_r$ for $r \neq j$). It follows that using $2C(P)/k$ expected communication, Alice has managed to reconstruct Bob's entire input; from an easy information-theoretic argument it follows that $2C(P)/k \ge n$, so $C(P) \ge \Omega(nk)$, proving the theorem.

Extending the above argument to cover the case where the protocol $P$ is allowed to return the wrong answer with probability $\eps$ is also easy, simply by showing that if Alice managed to learn Bob's entire bit-vector with probability $1-\eps$, then $(1-\eps) \cdot n$ bits must have been communicated: this also follows easily from information-theoretic arguments.

Note the crucial fact that the hard distribution we chose is symmetric: if the distribution of the inputs to the $k$ players was not symmetric, then the protocol $P$ could try to deduce the indices $i$ and $j$ from some statistical properties of the observed inputs, and act according to that. Then the best upper bound we could get on the communication complexity of $P'$ would be $C(P') \le C(P)$, which is much too weak.

We have just described the version of the symmetrization method for the blackboard model. A similar (slightly more complicated) line of thinking leads to a lower bound of $\Omega(n \log k)$ on the complexity of the coordinate-wise AND problem in the blackboard model; see Section \ref{sec:AND}.

Let us now sketch the symmetrization method as it is used in the \emph{coordinator model}, where players can only send and receive messages to and from the coordinator. We prove a lower bound on the same problem as above, the coordinate-wise $k$-\XOR problem.

\begin{theorem}
The coordinate-wise $k$-\XOR problem requires communication $\Omega(nk)$ in the coordinator model.
\end{theorem}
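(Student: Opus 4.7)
The plan is to adapt the blackboard symmetrization argument, but exploit the fact that in the coordinator model each message travels only between a single player and the coordinator (rather than being broadcast), which will allow us to simulate with a single "spokesman'' player instead of two.

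As in the blackboard case, I would take the hard distribution to be the uniform distribution on $(\{0,1\}^n)^k$, which is symmetric in the players. Given a protocol $P$ for the $k$-player coordinate-wise XOR problem with expected communication $C(P)$ and error $\eps$, I will build a $2$-player protocol $P'$ in which Alice gets $x \in \{0,1\}^n$ uniform and Bob gets $y_1,\ldots,y_{k-1} \in \{0,1\}^n$ uniform and independent of $x$. Using public randomness, Alice and Bob sample an index $i \in \{1,\ldots,k\}$ uniformly. Alice will play the role of player $p_i$ with input $x_i := x$; Bob will play the coordinator \emph{and} every other player $p_r$ ($r \neq i$), setting $x_r$ to one of his $y_\ell$'s. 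They then run $P$: whenever $P$ requires a message to pass between player $p_i$ and the coordinator, Alice and Bob exchange that message over the real channel; every other message in $P$ is between two entities that Bob simulates internally, and costs nothing in $P'$. Finally, whichever simulated party announces the answer ($x_1 \oplus \cdots \oplus x_k$) sends it to Bob, who XORs with his known $y_\ell$'s to recover $x_i = x$.

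The key accounting step is that because the input distribution is symmetric and $i$ is uniform in $\{1,\ldots,k\}$, the expected number of bits that player $p_i$ exchanges with the coordinator during $P$ is exactly $C(P)/k$; sending the $n$-bit final answer adds only $O(n)$. Thus $\E[C(P')] \le C(P)/k + O(n)$, and $P'$ lets Bob recover Alice's $n$-bit string with probability $1-\eps$. A standard information-theoretic inequality (Alice's input is independent of Bob's, so any protocol in which Bob learns Alice's string with probability $1-\eps$ must communicate at least $(1-\eps)n - O(1)$ bits in expectation) then gives $C(P)/k \ge \Omega(n)$, i.e.\ $C(P) \ge \Omega(nk)$.

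The main technical point to be careful about — and the part most likely to trip up the argument — is the appeal to symmetry when bounding $\E[C(P')]$ by $C(P)/k$. This step crucially uses that the hard distribution is invariant under permutations of the players, \emph{combined with} the fact that messages in the coordinator model are addressed to a single recipient, so charging each message to exactly one player (its non-coordinator endpoint) is well-defined. If instead the distribution were asymmetric, or if we tried the same trick in the blackboard model, the protocol $P$ could focus its communication on a particular player and the $1/k$ factor would disappear; I would therefore state the symmetry requirement explicitly and verify it for the uniform distribution before invoking the averaging bound. Removing the zero-error assumption and handling two-sided error is then routine via the same information-theoretic extension used in Theorem~\ref{thm:XOR_bb_intro}.
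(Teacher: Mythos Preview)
Your proof is correct and follows essentially the same symmetrization approach as the paper. The only cosmetic differences are that the paper gives Bob a single vector $y$ and has him sample the other $k-1$ inputs uniformly conditioned on their XOR equalling $y$, so that at the end \emph{Alice} reconstructs $y$; you instead hand Bob all $k-1$ vectors and let \emph{Bob} recover $x$ --- both routes land on the same $\Omega(n)$ information-theoretic lower bound, and your explicit $O(n)$ allowance for transmitting the final answer is a detail the paper simply glosses over.
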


Note that this theorem actually follows from Theorem \ref{thm:XOR_bb_intro}, since the blackboard model is stronger than the coordinator model. However, the proof we sketch here is useful as a warmup exercise, since it shows an easy example of the symmetrization technique as applied to the coordinator model. In the paper we prove multiple lower bounds using this technique, most of which do not follow from corresponding bounds in the blackboard model.

To prove this theorem, we use the same hard distribution as above: all inputs are uniform and independent. Let $P$ be a protocol in the coordinator model, which computes the coordinate-wise XOR, uses communication $C(P)$ and for now assume that $P$ never makes any errors. As before, we build a new protocol $P'$ for a 2-player problem. In the $2$-player problem, suppose that Alice gets input $x$ and Bob gets input $y$, where $x,y \in \{0,1\}^n$ are independent random bitvectors. Then $P'$ works as follows: Alice and Bob choose a single index $i \in \{1,\ldots,k\}$ from public randomness. Alice and Bob simulate the protocol $P$, where Alice simulates player $i$ and lets $x_i=x$, and Bob plays \emph{all the rest of the players}, including the coordinator, and chooses their inputs uniformly, conditioned on their XOR being equal to $y$.

To simulate the protocol $P$, whenever player $i$ needs to send a message to the coordinator, then Alice sends a message to Bob, and whenever the coordinator needs to send a message to player $i$, Bob sends a message to Alice. Whenever any player $j \neq i$ needs to speak to the coordinator, no communication is needed, since both are played by Bob. Note again that the distribution of the inputs of the $k$ players is independent uniform  (for this it is crucial to remember that $x$ and $y$ were uniform and independent in the first place). Once again, from reasons of symmetry, since the index $i$ was chosen uniformly and the inputs are symmetric, the expected communication performed by the protocol $P'$ is $\E[ C(P')] \le 2C(P)/k$. Furthermore, at the end of the running of $P'$, Alice knows the value of $x \oplus y$ so she can reconstruct the value of $y$. As before, this implies the theorem. The assumption that we never make errors can once again be easily removed.

\subsubsection{Discussion}
We see that the crux of the symmetrization technique in the coordinator model is to consider the $k$-player problem that we wish to lower-bound, to find a symmetric distribution which is hard for it, to give Alice the input of one player (chosen at random) and Bob the input of all other players, and to prove a lower bounds for this two-player problem. If the lower bound for the two player problem is $L$, the lower bound for the $k$-player problem will be $kL$. For the blackboard model, the proofs have the same outline, except in the $2$-player problem Alice gets the input of one randomly chosen player, Bob gets the input of another, and they both get the inputs of all the rest of the players. There is one important thing to note here: This argument only works when the hard distribution is symmetric.

%Both of these points are limitations of the symmetrization technique. We discuss them in more depth in the Conclusions (Section \ref{sec:conclusion}).

\subsection{Motivation, Previous Work and Related Models}
\label{subsec:motivation}

Communication complexity is a widely studied topic. In multiplayer number-in-hand communication complexity, the most studied mode of communication is the blackboard model. The message-passing model was already considered in \cite{DR98}. (This model can also be called the \emph{private-message model}, but note that this name was used in \cite{GH09,GG07} for a different model.)
The coordinator model can be thought of as a server-site setting~\footnote{This terminology is similar as the standard ``client-server", and is used extensively in the literature.}, where
there is one server and $k$ sites. Each site has gathered $n$ bits of
information and the server wants to evaluate a function on the collection
of these $k \cdot n$ bits.  Each site can only communicate with the server,
and a server can communicate with any site. This server-site model has been widely studied in the databases and distributed computing communities. Work includes computing top-$k$ \cite{CW04,MTW05,SS08} and heavy hitters
\cite{ZOWX06,HYLC11}.

Another closely related model is the {\em distributed monitoring model},
in which we also have one server and $k$ sites. The only difference is that now the computation is dynamic. That is, each site receives a stream of elements over time and the server would like to maintain continuously at all times some function $f$ of all the elements in the $k$ sites. Thus the server-site model can be seen as a one-shot version of the distributed streaming setting. It follows that any communication complexity lower bound in the message-passing model or the coordinator model also hold in the distributed monitoring model. A lot of work on distributed monitoring has been done recently in the theory community and the database community, including maintaining random samplings
\cite{CMYZ10}, frequency moments \cite{CG05,CMY08}, heavy hitters \cite{BO03,KCR06,MSDO05,YZ09,HYZ12},
quantiles \cite{CGMR05,YZ09}, entropy \cite{ABC09}, various sketches \cite{CMZ06,CGMR05} and some
non-linear functions \cite{SSK06,SSK08}.

We will come back to the latter two models in Section \ref{sec:application}. It is interesting to note that despite the large number of upper bounds (i.e.\ algorithms, communication protocols) in the above models,
very few lower bounds have been proved in any of those models,
likely because there were few known techniques to prove such results.

A further application of the message-passing model could be for Secure Multiparty Computation: in this model, there are several players who do not trust each other, but want to compute a joint function of their inputs, with
each of them learning nothing about the inputs of the others players except
what can be learned from the value of the joint function. Obviously, any lower bound in the message passing model immediately implies a lower bound on the amount of communication required for Secure Multiparty Computation. For more on this model, see e.g. \cite{MPC_survey}.

One final application is for the streaming model~\cite{AMS99}. In this model, there is a long stream of data that can only be scanned from left to right. The goal is to compute some function of the stream, and minimize the space usage. It is easy to see that if we partition the stream into $k$ parts and give each part to a different player, then a lower bound of $L$ on the communication complexity of the problem in the coordinator model implies a lower bound of
$L/k$ on the space usage. When $t$ passes over the model are allowed, a lower bound of $L$ in the coordinator model translates to a lower bound of $L/tk$ in the streaming model.

\subsection{Our Results and Paper Outline}
Our main technical result in this paper are lower bounds of $\Omega(nk)$ randomized communication for the bitwise $k$-party AND, OR, and MAJ (majority) functions in the coordinator model. These sidestep clever upper bound techniques (e.g. Slepian-Wolf coding) and can be found in Section \ref{sec:bitwise}. In the same section we prove some lower bounds for AND and OR in the blackboard model as well. Back to the coordinator model, we show that the connectivity problem (given $k$ players with subgraphs on a common set of nodes, determine if it is connected) requires $\Omega(nk / \log^2 k)$ communication. This is in Section~\ref{sec:CONN}.

The coordinate-wise lower bounds imply lower bounds for the well-studied problems of distinct elements, $\eps$-approximate heavy-hitters, and $\eps$-kernels in the server-site model (or the other related models). We show any randomized algorithm requires at least $\Omega(nk)$, $\Omega(n/\eps)$ and $\Omega(k/\eps^{(d-1)/2})$ communication, respectively. The latter is shown to be tight. This is in Section~\ref{sec:application}.

We give some direct-sum-like results in Section \ref{sec:directsum}.

\subsection{Subsequent Work}
A series of work has been done after the conference version of this paper. Woodruff and Zhang~\cite{WZ12,WZ14} combined the symmetrization technique and a new technique called {\em composition} to show strong lower bounds for approximately computing a number of statistical problems, including distinct elements and frequency moments, in the coordinator model. The same authors also used symmetrization to prove tight lower bounds for the exact computation of a number of graph and statistical problems  \cite{WZ13}, and shaved a $\log k$ factor for the connectivity problem studied in this paper. The other $\log k$ factor in the connectivity lower bound can be further shaved using a relaxation of symmetrization proposed in \cite{WZ14}. 
As will be discussed  in Section~\ref{sec:limitations}, due to a limitation of the symmetrization technique, it cannot be used to prove a tight lower bound for the $k$-player disjointness problem. This was listed as an open problem in the conference version of this paper, and was later settled by Braverman et al.~\cite{BEOPV13}, using a different technique based on information complexity. Recently Huang et al.~\cite{HRVZ13} applied symmetrization together with information complexity to prove tight lower bounds for approximate maximum matchings; Li et al.~\cite{LSWW14} used the symmetrization technique to prove lower bounds for numerical linear algebra problems.
In another recent work, Chattopadhyay et al.~\cite{CRR14} further extended the symmetrization technique to general communication topology compared with the coordinator model which essentially has a star communication topology.

\section{Preliminaries}
\label{sec:pre}

In this section we review some basic concepts and definitions. We denote $[n]=\{1,\ldots,n\}$. All logarithms are base-2 unless noted otherwise.

\paragraph{Communication complexity.}
Consider two players Alice and Bob, given bit vectors $A$ and $B$,
respectively.  Communication complexity (see for example the book
\cite{KN97}) bounds the communication between Alice and Bob that is needed to compute
some function $f(A,B)$. The \emph{communication complexity} of a particular protocol is the maximal number of bits that it communicated, taken in the worst case over all pairs of inputs. The \emph{communication complexity} of the problem $f$ is the best communication complexity of $P$, taken over all protocols $P$ that correctly compute $f$.

Certain functions (such as $f = \EQ$ which determines if $A$ equals
$B$) can be computed with less communication if randomness is
permitted.  Let $R^\eps(f)$ denotes the communication complexity when the protocol is allowed to make a mistake with probability $\eps$. The error is taken over the randomness used by the protocol.

Sometimes we are interested in the case where the input of Alice and Bob is drawn from some distribution $\mu$ over pairs of inputs. We want to allow an error $\eps$, this time taken over the choice of the input. The worst-case communication complexity in this case is denoted by $D_\mu^\eps(f)$. Yao~\cite{Yao77} showed that $R^\eps(f) \ge \max_{\mu} D_\mu^\eps(f)$, thus in order to prove a lower bounds for randomized protocols it suffices to find a hard distribution and prove a distributional lower bound for it. This is called the Yao Minimax Principle.

In this paper we use an uncommon notion of \emph{expected distributional communication complexity}. In this case we consider the distributional setting as in the last paragraph, but this time consider the expected cost of the protocol, rather than the worst-case cost; again, the expectation is taken over the choice of input. We denote this $\ED_\mu^\eps(f)$.

\subsection{Two-Party Lower Bounds}
We state a couple of simple two-party lower bounds that will be useful in our reductions.

\paragraph{2-\BITS.}
\label{sec:2-BITS}
Let $\zeta_\rho$ be a distribution over bit-vectors of length $n$, where each bit is $1$ with probability $\rho$ and $0$ with probability $1-\rho$. In this problem Alice gets a vector drawn from $\zeta_\rho$, Bob gets a subset $S$ of $[n]$ of cardinality $\abs{S}=m$, and Bob wishes to learn the bits of Alice indexed by $S$.

The proof of the following lemma is in Appendix \ref{sec:2-BITS-proof}.

\begin{lemma}
\label{lem:2-BITS}
$\ED_{\zeta_\rho}^{1/3}(\textrm{2-\BITS}) = \Omega(n \rho \log(1/\rho))$.
\end{lemma}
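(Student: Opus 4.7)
The plan is to prove the lower bound by an information-theoretic compression argument: any protocol that lets Bob recover $X|_S$ with error at most $1/3$ can be upgraded to a prefix code for $X|_S$ of only slightly larger expected length, and Shannon's source coding theorem then forces this length to be at least $H(X|_S \mid S) = m\,h(\rho) = \Theta(m\rho\log(1/\rho))$, where $h(\cdot)$ is binary entropy. Since the hard distribution in the intended application has $m = \Theta(n)$, this is exactly the claimed $\Omega(n\rho\log(1/\rho))$.

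First I would fix the public randomness to its best value by averaging, obtaining a deterministic protocol $\Pi$ of error at most $1/3$ on the product distribution of $X \sim \zeta_\rho$ and $S$ drawn from its hard distribution. Let $G(\Pi,S)$ denote Bob's guess and $\text{err}$ the event $\{G \neq X|_S\}$. From this I would define a prefix code $C = \Pi \,\|\, b \,\|\, \sigma$, where $b \in \{0,1\}$ is the indicator of $\text{err}$ and $\sigma$ is a Shannon--Fano code for $X|_S$ appended only when $b = 1$. The decoder, given $S$, either outputs $G(\Pi,S)$ or decodes $\sigma$, and so always reconstructs $X|_S$. Shannon's theorem then yields
\[
m\,h(\rho) \;=\; H(X|_S \mid S) \;\le\; \E[|C|] \;=\; \E[|\Pi|] + 1 + \E\bigl[|\sigma|\cdot\mathbf{1}[\text{err}]\bigr].
\]

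The core technical step is bounding the correction term $\E[|\sigma|\cdot\mathbf{1}[\text{err}]]$ by a small constant fraction of $m\,h(\rho)$, so that it does not cancel the signal. The naive bound $|\sigma| \le m$ reproduces the useless Fano estimate $\tfrac{1}{3}m$, which exceeds $m\,h(\rho)$ whenever $\rho$ is small. Instead I would split on whether $X|_S$ lies in the $\delta$-typical set $A^\delta_m$ of $\zeta_\rho^m$ for $\delta$ a small constant fraction of $h(\rho)$: on $A^\delta_m$, Shannon--Fano gives $|\sigma| \le m(h(\rho)+\delta)+O(1)$, so its contribution is at most $\tfrac{1}{3}(m\,h(\rho)+m\delta)$; off $A^\delta_m$ the worst-case bound $|\sigma| \le m\log(1/\rho)$ combined with AEP makes the contribution $o(m\,h(\rho))$ in the relevant regime $m\rho \to \infty$. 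Rearranging the displayed inequality yields $\E[|\Pi|] \ge \Omega(m\,h(\rho))$, as required.

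The main obstacle I expect is precisely this typical-set refinement: standard Fano loses an $\text{err} \cdot \log 2^m = \tfrac{m}{3}$ slack that dominates the entropy $m\,h(\rho)$ when $\rho$ is small, which is exactly the regime where the bound is most delicate. The typical-set sharpening recovers a bound proportional to $m\,h(\rho)$ and is the one non-routine ingredient in the proof. The remaining pieces --- derandomising the public coins, invoking Shannon's source-coding theorem, and AEP for Bernoulli sources --- are textbook, although for very small $\rho$ (where $m\rho$ is only a constant) a short boosting step via parallel repetition may be needed before the coding argument, paying only a constant multiplicative overhead.
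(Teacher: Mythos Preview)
Your approach is correct and takes a genuinely different route from the paper's. The paper's appendix proof runs a short mutual-information chain: from $I(X:Y\mid\Pi)\le \eps H(X)$ it rewrites $I(X:Y\mid\Pi)=I(X\Pi:Y\Pi)-H(\Pi)$ and then asserts $I(X\Pi:Y\Pi)\ge(1-\eps)H(X\Pi)$ to conclude $H(\Pi)\ge(1-2\eps)H(X)=\Omega(n h(\rho))$. That middle assertion is equivalent to $H(X\mid Y\Pi)\le\eps\,H(X\Pi)$, inferred from $\Pr[Y=X]\ge 1-\eps$; but Fano's inequality only yields $H(X\mid Y\Pi)\le 1+\eps n$, so this is precisely the place where the $\eps n$ versus $\eps\,n h(\rho)$ slack you worry about is being glossed over rather than confronted. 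The paper also silently treats the case $m=n$ (Bob learns all of $X$), which is fine for the two applications but is another point you handle more carefully.

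Your compression route makes the delicate step explicit. Encoding $X|_S$ as $\Pi\,\|\,b\,\|\,\sigma$ and invoking source coding is standard; the non-trivial part, which you identify correctly, is bounding $\E\bigl[|\sigma|\cdot\mathbf{1}[\text{err}]\bigr]$ by $O(\eps)\cdot m h(\rho)$ rather than $O(\eps)\cdot m$, and your typical-set split achieves this. One simplification worth noting: instead of Shannon--Fano plus AEP, you can encode $X|_S$ by first writing its Hamming weight $W$ in $O(\log m)$ bits and then the index of the $W$-subset in $\lceil\log\binom{m}{W}\rceil$ bits; on the Chernoff-likely event $\{W\le 2m\rho\}$ the correction length is already $O(m\rho\log(1/\rho))$, and a single tail bound disposes of the rest. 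This avoids juggling the typicality parameter $\delta$.

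In short: the paper's argument is terser and conveys the right picture (the transcript must carry essentially all of $H(X)$), but leaves the key entropy-versus-alphabet-size gap unargued; your argument is longer but actually resolves that gap, and your diagnosis of where the difficulty lies is exactly right.
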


\paragraph{2-\DISJ.}
\label{sec:2-DISJ}
In this problem Alice and Bob each have an $n$-bit vector. If we view vectors as sets, then each of them has a subset of $[n]$ corresponding to the $1$ bits. Let $x$ be the set of Alice and $y$ be the set of Bob. It is promised that $\abs{x \cap y} =1 \mbox{ or } 0$. The goal is to return $1$ if $x \cap y \neq \emptyset$, and $0$ otherwise.

\bigskip

We define the input distribution $\mu$ as follows. Let $l = (n+1)/4$. With probability $1/t$, $x$ and $y$ are random subsets of $[n]$ such that $\abs{x} = \abs{y} = l$ and $\abs{x \cap y} = 1$. And with probability $1 - 1/t$, $x$ and $y$ are random subsets of $[n]$ such that $\abs{x} = \abs{y} = l$ and $x \cap y = \emptyset$.  Razborov~\cite{Raz90} (see also
% There is an earlier paper by Kalyanasundaram and Schintger~
\cite{KS92}) proved that for $t=4$, $D^{1/100t}_{\mu}(\mbox{2-\DISJ}) = \Omega(n)$. In the following theorem we extend this result to general $t$ and also to the expected communication complexity. In Section \ref{sec:AND} we only need $t=4$, and in Section~\ref{sec:CONN} we will need general $t$.

The proof for the following lemma is in Appendix \ref{sec:2-DISJ-proof}.

\begin{lemma}
\label{lem:2-DISJ}
When $\mu$ has $|x \cap y| = 1$ with probability $1/t$ then
$\ED^{1/100t}_{\mu}(\mbox{2-\DISJ}) = \Omega(n)$.
\end{lemma}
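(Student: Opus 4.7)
The plan is to recycle the combinatorial core of Razborov's proof---the corruption lemma---and fold in two extensions: from the specific parameter $t=4$ to general $t$, and from worst-case to expected communication. In distribution-free form, the corruption lemma gives absolute constants $\beta,\delta>0$ such that for every combinatorial rectangle $R\subseteq\{0,1\}^n\times\{0,1\}^n$,
\[
\frac{|R\cap\mathrm{NO}|}{|\mathrm{NO}|}\;\leq\;\beta\cdot\frac{|R\cap\mathrm{YES}|}{|\mathrm{YES}|}+2^{-\delta n},
\]
where $\mathrm{YES}$ and $\mathrm{NO}$ are the sets of pairs $(x,y)$ with $|x|=|y|=l=(n+1)/4$ and $|x\cap y|$ equal to $1$ and $0$ respectively. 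This combinatorial inequality makes no reference to the YES/NO ratio, so reweighting by $\mu$ (which puts total mass $1/t$ on $\mathrm{YES}$ and $1-1/t$ on $\mathrm{NO}$, uniformly within each side) converts it into
\[
\mu(R\cap\mathrm{NO})\;\leq\;\beta(t-1)\,\mu(R\cap\mathrm{YES})+2^{-\delta n}.
\]

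Given a public-coin randomized protocol $P$ with expected cost $C$ and error $\eps=1/(100t)$ under $\mu$, I first fix a value of the public randomness that is simultaneously good for cost and for error: two Markov inequalities and a union bound give a setting where the resulting deterministic protocol has expected cost $\leq 3C$ and distributional error $\leq 3\eps$. For this deterministic protocol I partition the transcripts (equivalently, the rectangles of the protocol) into \emph{short} (transcript length $\leq L:=\delta n/2$) and \emph{long} (length $>L$).

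The zero-labeled transcripts together cover at least $(1-1/t)-3\eps\geq 1/2$ of the NO-mass of $\mu$ when $t\geq 4$, and I will bound this sum two ways. Long zero-rectangles contribute at most $\Pr_\mu[|P|>L]\leq 3C/L$ by Markov. Short zero-rectangles number at most $2^L$, and by the reweighted corruption bound their combined NO-mass is at most $\beta(t-1)\cdot 3\eps+2^L\cdot 2^{-\delta n}\leq O(\beta/100)+2^{-\delta n/2}$. Combining,
\[
\frac{1}{2}\;\leq\;\frac{3C}{L}+O\!\left(\frac{\beta}{100}\right)+2^{-\delta n/2},
\]
and since $\beta$ is an absolute constant (small enough that $O(\beta/100)<1/4$, as the Razborov/Kalyanasundaram--Schnitger proof gives an explicit moderate value), this forces $C=\Omega(L)=\Omega(n)$.

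The main place care is needed is the numerical interplay between $\beta$ and the $1/(100t)$ error: the argument hinges on $\beta(t-1)\eps$ remaining a small constant, which holds precisely because $\eps$ scales as $1/t$. The same approach would break down if $\eps$ were only $1/\sqrt{t}$, so the $1/t$ decay of the allowed error in the lemma statement is essential rather than cosmetic. The public-coin-to-deterministic reduction via averaging, and the Markov-based handling of long transcripts, are both standard moves and should not introduce any further obstacles.
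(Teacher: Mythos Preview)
Your approach is essentially the paper's: both arguments hinge on a corruption bound for rectangles together with a short/long transcript split, using Markov's inequality on transcript length to convert the rectangle bound into an \emph{expected}-cost lower bound. The one substantive difference is that the paper re-derives the corruption lemma for general $t$ by threading a $1/\sqrt{t}$ parameter through Razborov's partition argument (their Lemma~\ref{lem:mono-rect}, giving $\mu(B\cap R)\ge \tfrac{1}{40t}\mu(A\cap R)-2^{-0.01n}$), whereas you invoke the unweighted corruption bound as a black box and reweight; your shortcut is valid because $\mu$ is uniform within each of the YES and NO classes, and it is arguably cleaner.

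Two small points worth tightening. First, in this paper $\E[D_\mu^\eps(f)]$ already refers to the expected cost of a \emph{deterministic} protocol with distributional error $\eps$ (see the preliminaries), so there are no public coins to average over; you can drop that step and work directly with error $\eps=1/(100t)$ rather than $3\eps$. Second, your final inequality needs $\beta(t-1)\eps$ bounded strictly below $1/2$, i.e.\ roughly $\beta<50$; the paper's own constants unwind to $\beta\approx 40/(1-1/t)$, so for small $t$ the margin is thin. This is not a genuine gap---one can sharpen the corruption constant or simply replace $100$ by a larger absolute constant in the error bound---but you should make the numerics explicit rather than absorb them into an $O(\cdot)$.
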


\section{Bit-wise Problems}
\label{sec:bitwise}

\subsection{Multiparty AND/OR}
\label{sec:AND}

We now consider multiparty AND/OR (below we use $k$-\AND
and $k$-\OR for short). In the $k$-\AND problem, each player $i \ (1 \le i
\le k)$ has an $n$-bit vector $I_i$ and we want to establish the bitwise \AND of $I_i$, that is, $f_j(I_1, \ldots, I_k) = \bigwedge_i I_{i,j}$ for $j =
\{1,\ldots,n\}$. $k$-\OR is similarly defined with OR. Observe that the two
problems are isomorphic by $f_j(I_1, \ldots, I_k) =
\neg g_j(\bar{I_1}, \ldots, \bar{I_k})$ for $j = \{1, \ldots, n\}$, where
$\bar{I_i}$ is obtained by flipping all bits of $I_i$ and $g_j(\bar{I_1}, \ldots, \bar{I_k}) = \bigvee_i \bar{I}_{i,j}$. Therefore we only
need to consider one of them. Here we discuss $k$-\OR.

\subsubsection{Idea for the $k$-OR Lower Bound in the Coordinator Model}
\label{sec:AND:idea}

We now discuss the hard distribution and sketch how to apply the symmetrization technique for the $k$-OR problem in the coordinator model. The formal proof can be found in the next subsection.

We in fact start by describing two candidate hard distributions that \emph{do not} work. The reasons they do not work are interesting in themselves. Throughout this subsection, assume for simplicity that $k \ge 100 \log n$.

The most natural candidate for a hard distribution is to make each entry equal to $1$ with probability $1/k$. This has the effect of having each bit in the output vector be roughly balanced, which seems suitable for being a hard case. This is indeed the hard distribution for the blackboard model, but for the coordinator model (or the message-passing model) it is an easy distribution: Each player can send his entire input to the coordinator, and the coordinator can figure out the answer. The entropy of each player's input is only $\Theta((n \log k) / k)$, so the total communication would be $\Theta(n \log k)$ in expectation using e.g.\ Shannon's coding theorem;\footnote{To show the upper bounds in this subsection we use some notions from information theory without giving complete background for them. The reader can refer to e.g.~\cite{ThomasCover}, or alternatively can skip them entirely, as they are inconsequential for the remained of the paper and for understanding the symmetrization technique.} this is much smaller than the lower bound we wish to prove. Clearly, we must choose a distribution where each player's input has entropy $\Omega(n)$. This is the first indication that the $k$-player problem is significantly different than the $2$-player problem, where the above distribution is indeed the hard distribution.

The next candidate hard distribution is to randomly partition the $n$ coordinates into two equal-sized sets: The \emph{important set}, where each entry is equal to $1$ with probability $1/k$, and the \emph{balancing set}, where all entries are equal to $1$. Now the entropy of each player's input is $\Theta(n)$, and the distribution seems like a good candidate, but there is a surprising upper bound for this distribution: the coordinator asks $100\log n$ players to send him their entire input, and from this can easily figure out which coordinates are in the balancing set and which are in the important set. Henceforth, the coordinator knows this information, and only needs to learn the players' values in the important set, which again have low entropy. We would want the players to send these values, but the players themselves do not know which coordinates are in the important set, and the coordinator would need to send $nk$ bits to tell all of them this information. However, they do not need to know this in order to get all the information across: using a protocol known as \emph{Slepian-Wolf coding} (see e.g.\ \cite{ThomasCover}) the players can transmit to the coordinator all of their values in the important coordinates, with only $n \log k$ total communication (and a small probability of error). The idea is roughly as follows: each player $p_i$ chooses $100 n \log k / k$ sets $S_{i,j} \subseteq [n]$ independently and uniformly at random from public randomness. For each $j$, the player XORs the bits of his input in the coordinates of $S_{i,j}$, and sends the value of this XOR to the coordinator. The coordinator already knows the balancing set, so he only has $\Theta(n \log k / k)$ bits of uncertainty about player $i$'s input, meaning he can reconstruct the player's input with high probability (say by exhaustive search). The upper bound follows.

To get an actual hard distribution, we modify the hard distribution from the last paragraph. We randomly partition the $n$ coordinates into two equal-sized sets: The \emph{important set}, where each entry is equal to $1$ with probability $1/n$, and the \emph{noise set}, where each entry is equal to $1$ with probability $1/2$.\footnote{We could have chosen each entry in the important set to be equal to $1$ with probability $1/k$ as well, but choosing a value of $1/n$ makes the proofs easier. The important thing is to choose each of the noise bits to be equal to $1$ with probability $1/2$.} Clearly, each player's input has entropy $\Theta(n)$. Furthermore, the coordinator can again cheaply figure out which coordinates are in the important set and which are in the noise set, but the players do not know this information, and nothing like Slepian-Wolf coding exists to help them transmit the information to the coordinator. The distribution that we use in our formal proof is a little different than this, for technical reasons, but this distribution is hard as well.

We now sketch how to apply the symmetrization technique to prove that this distribution is indeed hard. To apply the symmetrization technique, we imagine giving Alice the input of one of the players, and to Bob the input of all of the others. Bob plays the coordinator, so Bob needs to compute the output; the goal is to prove a lower bound of $\Omega(n)$ on the communication complexity between Alice and Bob. What can Bob deduce about the answer? He can immediately take the OR of all the vectors that he receives, getting a vector where with good probability all of the noise bits are equal to $1$. (Recall that we assumed that the number of players is $k \ge 100 \log n$.) Among the important bits, roughly one of Alice's bits is equal to $1$, and the goal is to discover which bit it is. Alice cannot know which coordinates are important, and in essence we are trying to solve a problem very similar to Set Disjointness. A lower bound of $\Omega(n)$ is easy to get convinced of, since it is similar to the known lower bounds for set disjointness. Proving it is not entirely trivial, and requires making some small modifications to Razborov's classical lower bound on set disjointness~\cite{Raz90}.

We see that the lower bound for this distribution has to (implicitly) rule out a Slepian-Wolf type upper bound. This provides some evidence that any lower bound for the $k$-OR problem would have to be non-trivial.

\subsubsection{The Proof}

We prove the lower bound on $k$-\OR by performing a reduction from the promise version of the two-party {\em set disjointness} problem ($2$-\DISJ) which we lower-bounded in Lemma \ref{lem:2-DISJ}. Given an $(x,y)$ for $2$-\DISJ drawn from the hard distribution $\mu$, we construct an input for $k$-\OR.  Note that our mapping is not necessarily one-to-one, that is why we need a lower bound on the \emph{expected} distributional communication complexity of $2$-\DISJ.

\paragraph{Reduction.}
We start with Alice's input set $x$ and Bob's input set $y$ from the distribution $\mu$, with $t = 4$.  That is, both $x$ and $y$ have $l = n/4$ $1$ bits chosen at random under the condition that they intersect at one point with probability $1/4$, otherwise they intersect at no point.
For the $2$-\DISJ problem, we construct $k$ players' input sets $I_1, \ldots, I_k$ as follows. Let $z = [n] - y$. Let $S_2^l, \ldots, S_k^l$ be random subsets of size $l$ from $z$, and $S_2^{l-1}, \ldots, S_k^{l-1}$ be random subsets of size $l-1$ from $z$. Let $S_2^1, \ldots, S_k^1$ be random elements from $y$.
\begin{eqnarray*}
\left\{
\begin{array}{l}
I_1  =  x \\
I_{j} \ (j = 2, \ldots, k)  =  \left\{
  \begin{array}{rl}
    S_j^l & \textrm{w.p. } 1-1/4\\
    S_j^{l-1} \cup S_j^{1} & \textrm{w.p. } 1/4
  \end{array}
  \right.
\end{array}
\right.
\end{eqnarray*}
Let $\mu'$ be this input distribution for
$k$-\OR. If $I_j\ (2 \le j \le k)$ contains an element $S_j^1$, then we call this
element a special element.
\smallskip

This reduction can be interpreted as follows: Alice simulates a random
player $I_1$, and Bob, playing as the coordinator, simulates all the other $k-1$
players $I_2, \ldots, I_k$. Bob also keeps a set $V$ containing all the
special elements that ever appear in some $I_j\ (j = 2, \ldots, k)$. It is easy
to observe the following fact.
\begin{lemma}
\label{lem:AND-symmetric}
All $I_j\ (j = 1, \ldots, k)$ are chosen from the same distribution.
\end{lemma}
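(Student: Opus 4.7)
The plan is to prove the lemma by exhibiting, for every $j$, the same conditional distribution of $I_j$ given the set $y$; since marginalizing out $y$ then produces identical unconditional distributions, the lemma follows. This reduction to a conditional-on-$y$ analysis is natural because all of the randomness used to build the $I_j$'s (for $j \ge 2$) is parametrized by $y$ via the sets $z = [n] \setminus y$ and $y$ itself, and the distribution of $I_1 = x$ under $\mu$ is also specified conditionally on $y$.

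First I would write down the conditional law of $I_1 = x$ given $y$ from the definition of $\mu$ with $t=4$: with probability $1/4$, $x$ is a uniformly random size-$l$ subset of $[n]$ subject to $|x \cap y|=1$, and with probability $3/4$, $x$ is a uniformly random size-$l$ subset of $z$. Next I would unpack the construction of $I_j$ for $j \ge 2$: with probability $3/4$ it equals $S_j^l$, which is by definition a uniformly random size-$l$ subset of $z$, matching the $3/4$ branch of $x$ on the nose.

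The only non-cosmetic step will be to check that the $1/4$ branch $S_j^{l-1} \cup S_j^1$ yields the same uniform distribution over size-$l$ subsets of $[n]$ that meet $y$ in exactly one element as does $x$ conditioned on $|x \cap y| = 1$. I would carry this out by a short counting argument: choosing the singleton from $y$ uniformly and an independent size-$(l-1)$ subset of $z$ uniformly puts equal mass on each pair consisting of one element of $y$ and $l-1$ elements of $z$, and the map $(S_j^1, S_j^{l-1}) \mapsto S_j^{l-1} \cup S_j^1$ is a bijection onto the set of size-$l$ subsets of $[n]$ with $|\cdot \cap y| = 1$, so the induced law is uniform on that family.

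Combining the two branches, $I_j$ for every $j \ge 2$ has the same conditional distribution given $y$ as $I_1$, namely the $(3/4,1/4)$ mixture of \emph{uniform size-$l$ subsets of $z$} and \emph{uniform size-$l$ subsets of $[n]$ meeting $y$ in exactly one element}. Averaging over $y$ finishes the proof. I do not anticipate any real obstacle; the main thing to be careful about is that the two-stage sampling in the $1/4$ branch produces a genuinely uniform distribution on the correct combinatorial family, which is the one place where the specific choice of sizes $l-1$ and $1$ is being used.
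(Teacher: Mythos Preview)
Your proposal is correct and follows essentially the same approach as the paper: condition on $y$, verify that the conditional law of $I_1 = x$ under $\mu$ matches the explicit construction of $I_j$ for $j \ge 2$, and conclude. Your additional care in checking that the two-stage sampling $(S_j^1, S_j^{l-1}) \mapsto S_j^{l-1} \cup S_j^1$ is a bijection onto size-$l$ subsets meeting $y$ in exactly one point is a small but welcome elaboration over the paper's one-line ``this is precisely the distribution'' observation.
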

\begin{proof}
Since by definition $I_j\ (j = 2, \ldots, k)$ are chosen from the same distribution, we only need to show that $I_1 = x$ under $\mu$ is chosen from the same distribution as any $I_j$ is under $\mu'$.  Given $y$, note that $x$ is a random set of size $l$ in $z = [n]-y$ with probability $1/4$; and with the remaining probability $x$ is the union of a random set of size $l-1$ in $z$ along with a single random element from $y$.  This is precisely the distribution of each $I_j$ for $j\geq 2$.  
%According to the distribution $\mu'$, with probability $1/2$ the two random sets $x$ and $y$ are disjoint, and with probability $1/2$, they intersect at one element. This can be translated to the following: With probability $1/2$, $I_1$ is chosen as a random subset of size $l$ from $z = [n] - y$; and with the rest of the probability, $l-1$ elements in $I_1$ are chosen randomly from $z$ and the remaining $1$ element is chosen randomly from $y$.
\end{proof}

\noindent The following lemma shows the properties of our reduction.
\begin{lemma}
\label{lem:OR-DISJ}
If there exists a protocol $\cal P'$ for $k$-\OR on input distribution
$\mu'$ with communication complexity $C$ and error bound $\eps$, then
there exists a protocol $\cal P$ for the $2$-\DISJ on input distribution
$\mu$ with expected communication complexity $O(C/k)$ and error bound $\eps
+ 4k/n$.
\end{lemma}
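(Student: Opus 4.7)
The plan is to carry out a symmetrization reduction in the coordinator model that embeds the 2-\DISJ input $(x,y) \sim \mu$ into the $k$-player $k$-\OR input $(I_1,\ldots,I_k) \sim \mu'$ constructed above. Using the public randomness, Alice (with input $x$) and Bob (with input $y$) first sample $j^\star \in [k]$ uniformly. Alice takes on the role of player $j^\star$ and sets $I_{j^\star} := x$; Bob takes on the roles of the coordinator and every other player $I_j$, $j \neq j^\star$, generating those inputs together with the bookkeeping set $V$ of planted special elements from $y$ and the shared randomness exactly as the reduction prescribes. By Lemma~\ref{lem:AND-symmetric} each player has the same marginal distribution, so the uniform choice of $j^\star$ makes the induced joint distribution on $(I_1,\ldots,I_k)$ agree with $\mu'$. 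Alice and Bob then run $\mathcal{P}'$ end-to-end, exchanging bits on the Alice--Bob channel only when the coordinator communicates with player $j^\star$; every other coordinator-player interaction is simulated internally by Bob at zero channel cost.

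For the expected cost, let $C_j$ be the expected number of bits $\mathcal{P}'$ exchanges between player $j$ and the coordinator under $\mu'$. Since every message of $\mathcal{P}'$ is between the coordinator and exactly one player, $\sum_{j=1}^k C_j = C$, and the uniform draw of $j^\star$ gives $\E[C_{j^\star}] = C/k$, matching the claimed $O(C/k)$ bound. Once $\mathcal{P}'$ terminates, Bob (acting as the coordinator) holds the OR vector $W = \bigvee_j I_j$. He decodes an answer to 2-\DISJ as follows: if some coordinate $i \in y$ has $W_i = 1$ and does \emph{not} lie in his bookkeeping set $V$, he declares ``intersecting''; otherwise he declares ``disjoint''.

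It remains to track the error, which is the delicate step. Conditioned on $\mathcal{P}'$ being correct (probability at least $1-\eps$), every 1-bit of $W$ lying in $y$ is either a planted special (hence in $V$) or an element of $x \cap y$, so Bob's decoding is always correct when $x \cap y = \emptyset$, and can err in the case $|x \cap y| = 1$ only when the unique $e \in x \cap y$ coincides with the special element of some $I_j$, $j \neq j^\star$. The main obstacle is to bound this coincidence event cleanly: we use that, conditional on $|x \cap y|=1$, the element $e$ is uniform in $y$ (of cardinality $l = (n+1)/4$); that each of the $k-1$ other players independently contributes a uniformly random element of $y$ as its special element with probability $1/4$; and that these specials are chosen independently of $x$, hence independently of $e$. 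A union bound then yields coincidence probability at most $(k-1)\cdot (1/4)\cdot (1/l) \le 4k/n$, so $\mathcal{P}$ errs with probability at most $\eps + 4k/n$, as required.
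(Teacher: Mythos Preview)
Your proof is correct and takes essentially the same approach as the paper: embed $(x,y)$ into the $k$-\OR instance with Alice as one (random) player and Bob as the coordinator plus the remaining players, simulate $\mathcal{P}'$, decode via the bookkeeping set $V$, bound the communication by symmetry, and bound the extra error by a union bound on the event that the unique intersection element collides with some planted special. The only cosmetic differences are your explicit uniform draw of $j^\star$ (the paper fixes Alice at player~$1$ and appeals directly to the symmetry of $\mu'$) and your slightly sharper constants in both the communication and the coincidence-probability bounds.
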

\begin{proof}
Let us again view $I_j\ (j = 1, \ldots, k)$ as $n$-bit vectors. We show how
to construct a protocol $\cal P$ for $2$-\DISJ from a protocol $\cal P'$ for
$k$-\OR with the desired communication cost and error bound. $\cal P$ is
constructed as follows: Alice and Bob first run $\cal P'$ on $I_1, \ldots,
I_k$. Let $W \subseteq [n]$ be the set of indices where the results are
$1$. Bob checks whether there exists some $w \in W \cap y$ such that $w
\not\in V$. If yes, then $\cal P$ returns ``yes", otherwise $\cal P$
returns ``no".

We start by analyzing the communication cost of $\cal P$. Since player
$I_1$ is chosen randomly from the $k$ players, and
Lemma~\ref{lem:AND-symmetric} that all players' inputs are chosen from a
same distribution, the expected amount of communication between $I_1$
(simulated by Alice) and the other $k-1$ players (simulated by Bob) is at
most a $2/k$ fraction of the total communication cost of $\cal
P$. Therefore the expected communication cost of $\cal P$ is at most
$O(C/k)$.

For the error bound, we have the following claim: With probability at least
$(1 - 4k/n)$, there exists a $w \in W \cap y$ such that $w \not\in V$ if
and only if $x \cap y \not\in \emptyset$. First, if $x \cap y = \emptyset$,
then $I_1 = x$ cannot contain any element $w \in V \subseteq y$, thus the
resulting bits in $W \cap y$ cannot contain any special element that is not
in $V$. On the other hand, we have
$\Pr[((W \cap y) \subseteq V) \wedge (x \cap y \neq \emptyset)] \le
4k/n.$
This is because $((W \cap y) \subseteq V)$ and $(x \cap y \neq
\emptyset)$ hold simultaneously if and only if there exist some $S_j^1\ (1
\le j \le k)$ such that $S_j^1 \in x \cap y$. According to our random
choices of $S_j^1\ (j = 1, \ldots, k)$, this holds with probability at most
$k/l \le 4k/n$. Therefore if $\cal P'$ errors at most $\eps$, then $\cal P$
errors at most $\eps + 4k/n$.  
\end{proof}

\noindent Combining Lemma~\ref{lem:2-DISJ} and Lemma~\ref{lem:OR-DISJ},
we have the following theorem.
\begin{theorem}
\label{thm:k-OR}
$D_{\mu'}^{1/800}(\mbox{$k$-\OR}) = \Omega(nk)$, for $n \ge 3200k$ in the coordinator model.
\end{theorem}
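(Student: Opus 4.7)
The plan is simply to chain together the reduction from Lemma~\ref{lem:OR-DISJ} with the two-party lower bound from Lemma~\ref{lem:2-DISJ}, and then check that the parameters line up so that the error introduced by the reduction is absorbed into the slack allowed in the $2$-\DISJ bound. I would argue by contradiction: suppose there exists a protocol $\cal P'$ for $k$-\OR on input distribution $\mu'$ with error $1/800$ whose communication cost is $C = o(nk)$; the goal is to derive a protocol for $2$-\DISJ under $\mu$ whose parameters violate Lemma~\ref{lem:2-DISJ} with $t=4$.

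First, I would feed $\cal P'$ into the reduction given in Lemma~\ref{lem:OR-DISJ}. This produces a protocol $\cal P$ for $2$-\DISJ on $\mu$ whose expected communication is $O(C/k)$ and whose error is at most $1/800 + 4k/n$. Here the assumption $n \ge 3200k$ becomes crucial: it gives $4k/n \le 4k/(3200k) = 1/800$, so the total error of $\cal P$ is at most $1/800 + 1/800 = 1/400 = 1/(100 \cdot 4)$. This is exactly the error threshold for which Lemma~\ref{lem:2-DISJ} with $t=4$ applies, asserting that any protocol for $2$-\DISJ under $\mu$ must use expected communication $\Omega(n)$.

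Combining the two, we obtain $O(C/k) = \Omega(n)$, i.e.\ $C = \Omega(nk)$, which contradicts the assumption $C = o(nk)$ and therefore establishes the theorem. I would phrase this cleanly as a one-line deduction from the two lemmas, since no new technical work is required at this step.

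I do not expect any real obstacle here: the reduction (Lemma~\ref{lem:OR-DISJ}) already does the symmetrization work, and Lemma~\ref{lem:2-DISJ} provides the two-party lower bound. The only subtlety is making sure that the $4k/n$ error term contributed by the reduction does not push the combined error beyond the $1/(100t)$ threshold needed to invoke Lemma~\ref{lem:2-DISJ} with $t=4$; the hypothesis $n \ge 3200 k$ in the theorem statement is tailored precisely for this purpose, and choosing the original error of $\cal P'$ to be $1/800$ (half of $1/400$) leaves exactly the right budget.
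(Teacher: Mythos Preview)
Your proposal is correct and matches the paper's own proof essentially line for line: both argue by contradiction, apply Lemma~\ref{lem:OR-DISJ} to convert a hypothetical $o(nk)$-cost, error-$1/800$ protocol for $k$-\OR into an $o(n)$-expected-cost protocol for $2$-\DISJ with error at most $1/800 + 4k/n \le 1/400$, and then invoke Lemma~\ref{lem:2-DISJ} with $t=4$. Your explicit verification that $n \ge 3200k$ forces $4k/n \le 1/800$ is exactly the parameter check the paper performs.
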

\begin{proof}
If there exists a protocol $\cal P'$ that computes $k$-OR on input distribution $\mu'$ with communication complexity $o(nk)$ and error bound $1/800$, then by Lemma~\ref{lem:OR-DISJ} there exists a protocol $\cal P$ that computes $2$-\DISJ on input distribution $\mu$ with expected communication complexity $o(n)$ and error bound $1/800 + 4k/n \le 1/400$, contradicting Lemma~\ref{lem:2-DISJ} (when $t = 4$).  
\end{proof}

We discuss the applications of this lower bound to the distinct elements problem in Section \ref{sec:application}.

%%%%%%%%%%%%%%%%%%%%%%%%%%%%%%%%%%%%%%%%%%%%%%%%%

\subsection{Multiparty AND/OR with a Blackboard}
\label{sec:AND-blackboard}

Denote the $k$-OR problem in the blackboard model by $k$-\ORb. The general idea to prove a lower bound for $k$-\ORb is to perform
a reduction from a $2$-party bit-wise OR problem ($2$-\OR for short) with
public randomness. The $2$-\OR problem is the following.  Alice and
Bob each have an $n$-bit vector drawn from the following distribution:
Each bit is $1$ with probability $1/k$ and $0$ with
probability $1-1/k$. They also use public random bits to generate another
$(k-2)$ $n$-bit vectors such that each bit of these vectors is $1$ with
probability $1/k$ and $0$ with probability $1-1/k$. That is, the bits
are drawn from the same distribution as their private inputs. Let $\nu$ be
this input distribution. Alice and Bob want to compute bit-wise \OR of all
these $k$ $n$-bit vectors. For this problem we have the following theorem.
\begin{theorem}
\label{thm:2-OR}
$\ED^{1/3}_\nu(2\mbox{-\OR}) = \Omega(n/k \cdot \log k)$.
\end{theorem}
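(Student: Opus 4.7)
The plan is to reduce 2-\OR on $\nu$ to 2-\BITS on $\zeta_{1/k}$ and then invoke Lemma~\ref{lem:2-BITS} with $\rho = 1/k$. The key observation is that after any protocol for 2-\OR is run, Bob has computed the output $X = A \vee B \vee P$ (with error at most $\eps$), where $P = P_1 \vee \cdots \vee P_{k-2}$ is the OR of the public vectors. Since Bob also knows $B$ (his private input) and $P$ (from public randomness), he can extract $A_i = X_i$ for every coordinate $i$ in the set $U := \{i : B_i = 0 \text{ and } P_i = 0\}$, because for such $i$ we have $X_i = A_i \vee 0 \vee 0 = A_i$. Thus any 2-\OR protocol with expected communication $C$ and error $\eps$ yields a protocol for the following 2-\BITS instance at the same cost and error: Alice has $A \sim \zeta_{1/k}$ and Bob outputs $A|_U$.

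Under $\nu$, the coordinates of $B, P_1, \ldots, P_{k-2}$ are independent Bernoulli$(1/k)$, so each index $i$ lies in $U$ independently with probability $q := (1-1/k)^{k-1} \ge 1/e$, and crucially, $U$ is independent of $A$. In particular, $|U|$ is concentrated around $nq \ge n/e$ by Chernoff bounds, so $|U| \ge n/(2e)$ with probability $1 - o(1)$. This gives a well-behaved 2-\BITS instance on $\zeta_{1/k}$ with Bob's query set of size $\Theta(n)$.

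Applying Lemma~\ref{lem:2-BITS} with $\rho = 1/k$ gives $\E[D^{1/3}_{\zeta_{1/k}}(\text{2-\BITS})] = \Omega(n \rho \log(1/\rho)) = \Omega((n/k)\log k)$, which transfers to $C = \Omega((n/k)\log k)$ as desired. The main subtlety I expect to require care is the distribution of Bob's query set: Lemma~\ref{lem:2-BITS} is stated for 2-\BITS where Bob holds a set of a given size, whereas our reduction produces a random $U$ of random size. I plan to handle this by conditioning on the high-probability event $|U| \ge n/(2e)$ and applying the lemma on this event, which costs only a constant factor in the bound. A secondary point to verify is that the reinterpretation correctly tracks the distributional cost and error guarantees of the original protocol; but since the same joint distribution on $(A, B, P_1, \ldots, P_{k-2})$ governs both problems, the transfer is immediate.
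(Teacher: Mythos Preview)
Your proposal is correct and essentially identical to the paper's argument: both reduce to 2-\BITS by observing that Bob must learn Alice's bits on the set $U=\{i:(B\vee P)_i=0\}$, note via Chernoff that $|U|=\Theta(n)$ with probability $1-o(1)$, and then invoke Lemma~\ref{lem:2-BITS} with $\rho=1/k$. Your write-up is slightly more explicit about the random size of $U$ and the conditioning step, but the approach is the same.
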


\begin{proof}
W.l.o.g., let us assume that Bob outputs the final result of $2$-\OR. It is
easy to see that if we take the bit-wise OR of the $(k-2)$ $n$-bit vectors
generated by public random bits and Bob's input vector, the resulting
$n$-bit vector $b$ will have at least a constant density of $0$ bits
with probability at least $1 - o(1)$, by a Chernoff bound. Since Bob can
see the $k-2$ public vectors, to compute the final result, all that Bob has
to know are bits of Alice's vector on those indices $i\ (1 \le i \le n)$
where $b[i] = 0$. In other words, with probability at least $1 - o(1)$,
Bob has to learn a specific, at least constant fraction of bits in Alice's vector up
to error $1/3$. Plugging in Lemma~\ref{lem:2-BITS} with $\rho = 1/k$, we know that the expected communication complexity is at least $(1 - o(1)) \cdot \Omega(n/k \cdot
\log k) = \Omega(n/k \cdot \log k)$.  
\end{proof}

We reduce this problem to $k$-\ORb as follows: Alice and Bob each
simulates a random player, and they use public random bits to simulate
the remaining $k-2$ players: The $(k-2)$ $n$-bit vectors generated by their
public random bits are used as inputs for the remaining $k-2$ random
players. Observe that the input of all the $k$ players are drawn from
the same distribution, thus the $k$ players are
symmetric. Consequently, the expected amount of communication between
the two random players simulated by Alice and Bob and other players
(including the communication between the two random players) is at
most $O(1/k)$ faction of the total communication cost of protocol for
$k$-\ORb.  We have the following theorem.

\begin{theorem}
\label{thm:k-OR-board}
$D^{1/3}_\nu(k\mbox{-\ORb}) = \Omega(n \cdot \log k)$.
\end{theorem}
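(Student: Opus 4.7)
The plan is to prove the lower bound by a symmetrization reduction from the $2$-\OR problem, whose expected distributional complexity was just bounded by Theorem \ref{thm:2-OR}. Given a protocol $\mathcal{P}'$ for $k$-\ORb on input distribution $\nu$ with worst-case communication $C$ and error $1/3$, I will construct a $2$-party protocol $\mathcal{P}$ for $2$-\OR on input distribution $\nu$ whose \emph{expected} communication is at most $O(C/k)$. Chaining with Theorem~\ref{thm:2-OR}, which gives $\E[D_\nu^{1/3}(2\mbox{-\OR})] = \Omega((n/k)\log k)$, this yields $C = \Omega(n\log k)$.

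For the reduction, Alice holds her $n$-bit vector $a$ and Bob holds his $n$-bit vector $b$ (both drawn from the marginal that puts each bit to $1$ with probability $1/k$); they share the $k-2$ public vectors from the definition of $\nu$. Using public randomness they pick a uniformly random ordered pair of distinct indices $(i,j)\in[k]^2$. Alice pretends to be player $i$ with input $a$, Bob pretends to be player $j$ with input $b$, and both simulate the remaining $k-2$ players using the public vectors as those players' inputs. They then run $\mathcal{P}'$ in the blackboard model. Whenever player $r\notin\{i,j\}$ is supposed to write, both Alice and Bob can locally compute the message (they both know this player's input and the entire blackboard so far); no communication is required. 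Whenever player $i$ writes, Alice sends the message to Bob, and symmetrically for player $j$. At the end, the blackboard reveals the bitwise OR of all $k$ vectors, which is exactly the $2$-\OR answer; so both parties output it. Since $\mathcal{P}'$ errs with probability at most $1/3$ over $\nu$, so does $\mathcal{P}$.

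Now the crucial symmetry observation: under $\nu$, all $k$ input vectors are i.i.d., so the joint distribution on inputs is invariant under any permutation of the $k$ player labels. Therefore, for any fixed transcript-generating protocol, the expected number of bits written by player $r$ is the same for every $r\in[k]$, and the sum over $r$ is at most $C$. Because $i$ and $j$ are chosen uniformly at random and independently of the inputs, the expected total communication between Alice and Bob in $\mathcal{P}$ is at most $2C/k$. Combining with Theorem~\ref{thm:2-OR}, we conclude $2C/k = \Omega((n/k)\log k)$, i.e.\ $C = \Omega(n\log k)$.

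The only subtle step is the symmetry claim: it hinges on $\nu$ being a product of identical marginals across players, and on $(i,j)$ being chosen independently of the inputs via public randomness. If one worries that messages in the blackboard model carry an implicit ``speaker identity,'' this causes no issue here because each message is attributed to a specific player in the protocol, and symmetry of the input distribution means the expected load per player is uniform. No logarithmic overhead arises (unlike in the message-passing-to-coordinator simulation), because broadcast on the blackboard is free. This completes the proof sketch of Theorem~\ref{thm:k-OR-board}.
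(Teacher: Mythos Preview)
Your proof is correct and follows essentially the same symmetrization reduction as the paper: Alice and Bob simulate two random players, the remaining $k-2$ players are simulated from the public vectors, symmetry of $\nu$ gives expected communication $O(C/k)$, and Theorem~\ref{thm:2-OR} then forces $C=\Omega(n\log k)$. Your write-up is in fact more explicit about the mechanics of the blackboard simulation and the $2C/k$ bound than the paper's own one-sentence proof.
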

\begin{proof}
It is easy to see that if we have a protocol for the $k$-\ORb on input
distribution $\nu$ with communication complexity $o(n \cdot \log k)$ and
error bound $1/3$, then we have a protocol for $2$-\OR on input distribution
$\nu$ with expected communication complexity $o(n/k \cdot \log k)$ and
error bound $1/3$, contradicting Theorem~\ref{thm:2-OR}.  
\end{proof}

It is easy to show a tight deterministic upper bound of $O(n \log k)$ for this problem in the blackboard model: each player speaks in turn and writes the coordinates where he has $1$ and no player that spoke before him had $1$.

\subsection{Majority}
\label{sec:MAJ}
In the $k$-\MAJ problem, we have $k$ players, each having a bit vector of
length $n$, and they want to compute bit-wise majority, i.e.\ to determine for each coordinate whether the majority of entries in this coordinate are $1$ or $0$. We prove a lower bound of $\Omega(nk)$ for this problem in the coordinator model by a reduction to 2-\BITS via symmetrization.

For the reduction, we consider $k = 2t+1$ players and describe the input
distribution $\tau$ as follows.  For each coordinate we
assign it either $t$ or $(t+1)$, each with probability $1/2$, independently over all coordinates. This indicates whether the $k$ players contain $t$ or $(t+1)$ $1$ bits among them in this coordinate, and hence whether that index has a majority of $0$ or $1$, respectively. Then we
place the either $t$ or $(t+1)$ $1$ bits randomly among the $k$
players inputs in this coordinate.  It follows that under $\tau$: (i) each player's bits are
drawn from the same distribution, (ii) each bit of each player is $1$
with probability $1/2$ and $0$ with probability $1/2$, and (iii) each
index has probability $1/2$ of having a majority of $1$s.

\begin{theorem}
\label{thm:k-MAJ}
$D^{1/6}_{\tau}(\mbox{k-\MAJ}) = \Omega(nk)$ in the coordinator model.
% for $n>13$.
\end{theorem}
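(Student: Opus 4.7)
The plan is to apply the symmetrization technique to reduce the $k$-\MAJ problem under $\tau$ to a $2$-party problem, and then to lower-bound the $2$-party complexity by invoking Lemma~\ref{lem:2-BITS}. Concretely, given any $k$-\MAJ protocol $\mathcal{P}$ with cost $C$ and error $1/6$, I would use public randomness to pick a uniformly random index $i \in [k]$, have Alice play player $I_i$, and have Bob simultaneously play the coordinator together with every other player $I_j$ for $j \neq i$. Since $\tau$ is symmetric in the $k$ players and $i$ is uniform, the expected communication between Alice and Bob in this simulation is at most $2C/k$.

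The next step is to analyse the resulting $2$-party problem. Write $A = I_i$ for Alice's input and $s_\ell = \sum_{j \neq i} I_{j,\ell}$ for Bob's column sum in coordinate $\ell$. Since $\tau$ places exactly $T_\ell \in \{t, t+1\}$ ones into coordinate $\ell$ and $A_\ell \in \{0,1\}$, the identity $s_\ell + A_\ell = T_\ell$ forces $s_\ell \in \{t-1, t, t+1\}$, and a short case analysis shows that the majority in coordinate $\ell$ is determined by $B$ alone whenever $s_\ell \neq t$ (indeed, $s_\ell = t-1$ forces $A_\ell = 1$ and $s_\ell = t+1$ forces $A_\ell = 0$), while for $s_\ell = t$ the majority simply equals Alice's bit $A_\ell$. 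Letting $S = \{\ell : s_\ell = t\}$, a direct calculation from the definition of $\tau$ shows that $A$ is uniform on $\{0,1\}^n$, the indicators $\mathbf{1}[\ell \in S]$ are i.i.d.\ Bernoulli with parameter $(t+1)/(2t+1) > 1/2$, and $A$ and $S$ are independent. A Chernoff bound then gives $|S| \geq n/3$ except with probability $2^{-\Omega(n)}$.

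With these facts in hand, in order to output the correct majority on every coordinate of $S$, Bob must (up to error $1/6$) reconstruct $A_S$ from his input $B$ and the transcript. Since $A_S$ is a uniform bit-string of length $|S| = \Omega(n)$ that is independent of $B$, this is essentially a $2$-\BITS instance with density $\rho = 1/2$; Lemma~\ref{lem:2-BITS} then delivers a lower bound of $\Omega(n \cdot \frac{1}{2} \cdot \log 2) = \Omega(n)$ on the expected communication of the $2$-party problem. Combining with the $2C/k$ upper bound from symmetrization gives $C = \Omega(nk)$, as required.

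The main obstacle I anticipate is the technical mismatch between the $2$-party problem produced by symmetrization, where the set $S$ is a (random) function of Bob's input rather than a freely specified subset, and the canonical $2$-\BITS setup. Bridging this gap requires either conditioning on the typical event $|S| \geq n/3$ and translating the error so that $1/6$ for $k$-\MAJ remains below $1/3$ once passed to $2$-\BITS, or else bypassing $2$-\BITS altogether and applying Fano's inequality directly to the identity $H(A \mid B) = |S|$ to obtain the same $\Omega(n)$ bound. Neither route is difficult, but making the constants line up cleanly is the most delicate bookkeeping step.
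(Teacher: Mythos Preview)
Your proposal is correct and follows essentially the same route as the paper: symmetrize by giving Alice a random player and Bob the remaining $k-1$ players plus the coordinator, observe that on the coordinates where Bob's column sum is exactly $t$ the majority equals Alice's bit, apply concentration to get $\Omega(n)$ such coordinates, and invoke Lemma~\ref{lem:2-BITS} with $\rho=1/2$. Your probability computation $\Pr[s_\ell = t] = (t+1)/(2t+1)$ is in fact sharper than the paper's (which rounds to $1/2$, $1/4$, $1/4$), and the bookkeeping you anticipate---absorbing the concentration failure into the error to stay below $1/3$---is exactly what the paper does, arriving at $1/6 + 1/7 < 1/3$.
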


\begin{proof}
Now we use symmetry to reduce to the two-player problem 2-\BITS.
Alice will simulate a random player under $\tau$ and Bob simulate the
other $k-1$ players.  Notice that, by (ii) any subset of $n'$ indices
of Alice's are from $\zeta_{1/2}$.  We will show that Alice and Bob
need to solve 2-\BITS on $\Omega(n)$ bits to solve $k$-\MAJ.  And, by
(i), since all players have the same distribution and Alice is a
random player, her expected cost in communicating to Bob is at most
$O(C/k)$ if $k$-\MAJ can be solved in $C$ communication.

Now consider the aggregate number of $1$ bits Bob has for each index;
he has $(t-1)$ $1$ bits with probability $1/4$, $t$ $1$ bits with
probability $1/2$, and $(t+1)$ $1$ bits with probability $1/4$.  Thus
for at most $(3/4)n$ indices (with probability at least $1-
\exp(-2(n/4)^2/n) = 1-\exp(-n/8) \geq 1-1/7$) that have either $(t-1)$
or $(t+1)$ $1$ bits Bob knows that these either will or will not have
a majority of $1$ bits, respectively.  But for the other at least $n/4
= \Omega(n)$ remaining indices for which Bob has exactly $t$ $1$ bits,
whether or not these indices have a majority of $1$ bits depends on
Alice's bit.  And conditioned on this situation, each of Alice's
relevant bits are $1$ with probability $1/2$ and $0$ with probability
$1/2$, hence distributed by $\zeta_{1/2}$.  Thus conditioned on at
least $n/4$ undecided indices, this is precisely the 2-\BITS problem
between Alice and Bob of size $n/4$.

Thus a protocol for $k$-\MAJ in $o(nk)$ communication and error bound
$1/6$ would yield a protocol for 2-\BITS in expected $o(n)$
communication and error bound $1/6 + 1/7 < 1/3$, by running the protocol simulated
between Alice and Bob.
% By repeating this simulated protocol some
% large constant number of times and returning the outcome that resolves
% a majority number of times, would result in a protocol for 2-\BITS
% with error bound $1/3$ and $o(n)$ communication.
This contradicts Lemma \ref{lem:2-BITS}, and proves that $k$-\MAJ requires $\Omega(kn)$ communication when allowing error on at most $1/6$ fraction of inputs.  
\end{proof}

\paragraph{Extensions.}
This lower bound can easily be extended beyond just majority (threshold $1/2$) to any constant threshold $\phi (0 < \phi < 1)$, by assigning to each coordinate either $\lfloor k\phi
\rfloor$ or $(\lfloor k\phi \rfloor+1)$ $1$ bits with probability
$1/2$ each. Let $\tau_{\phi}$ denote this distribution. Then the analysis just uses $\zeta_\phi$ in place of
$\zeta_{1/2}$, which also yields an $\Omega(n)$ lower bound for
2-\BITS. We call this extended $k$-\MAJ problem $(k,\phi)$-\MAJ.

\begin{corollary}
\label{cor:k-MAJ}
$D^{1/6}_{\tau_{\phi}}((k,\phi)\mbox{-\MAJ}) = \Omega(nk)$ for any constant $\phi (0 < \phi < 1)$ in the coordinator model.
% for $n>13$.
\end{corollary}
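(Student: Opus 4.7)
The plan is to mirror the proof of Theorem \ref{thm:k-MAJ}, replacing the balanced threshold $1/2$ by an arbitrary constant $\phi \in (0,1)$. Set $m = \lfloor k\phi \rfloor$, and define $\tau_\phi$ so that each coordinate independently receives total count $m$ or $m+1$ (each with probability $1/2$), with the $1$-bits then placed uniformly at random among the $k$ players. A direct check verifies the three properties used in Theorem~\ref{thm:k-MAJ}: (i) every player has the same marginal input distribution; (ii) each individual bit equals $1$ with probability $(2m+1)/(2k) \approx \phi$, so a single player's vector is distributed essentially as $\zeta_\phi$; and (iii) each coordinate exceeds the threshold $\phi$ with probability exactly $1/2$.

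Then apply symmetrization exactly as before: Alice plays one uniformly random player, and Bob plays the other $k-1$ along with the coordinator. By symmetry the expected Alice--Bob communication needed to simulate a $k$-\MAJ protocol of total cost $C$ is $O(C/k)$. Now examine what Bob can deduce: in each coordinate Bob's count of $1$s is $m-1$, $m$, or $m+1$, with respective probabilities $\tfrac{1}{2}\cdot \tfrac{m}{k}$, $\tfrac{k+1}{2k}$, and $\tfrac{1}{2} \cdot \tfrac{k-m-1}{k}$. In the outer two cases Bob already knows the majority outcome from his own inputs; only in the middle case (probability $\approx 1/2$) does the answer depend on Alice's bit, and a short computation yields $\Pr[\text{Alice}=1 \mid \text{Bob count}=m] = (m+1)/(k+1) \approx \phi$.

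A Chernoff bound guarantees that, except with small constant probability, at least $n/4$ coordinates fall into this undecided bucket. Bob knows exactly which coordinates these are from his own inputs, and on them Alice's bits are i.i.d.\ Bernoulli with parameter close to $\phi$, hence distributed as $\zeta_{\phi'}$ for a constant $\phi'\in(0,1)$. Thus solving $(k,\phi)$-\MAJ on the undecided coordinates reduces to solving $2$-\BITS on $\Omega(n)$ bits drawn from $\zeta_{\phi'}$, and Lemma \ref{lem:2-BITS} then forces expected Alice--Bob communication $\Omega\bigl(n\,\phi'\log(1/\phi')\bigr) = \Omega(n)$, which combined with the $O(C/k)$ simulation cost gives $C = \Omega(nk)$.

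The main obstacle is just the bookkeeping of the error budget: one must choose the Chernoff slack so that the original $(k,\phi)$-\MAJ error $1/6$, plus the failure probability of having too few undecided coordinates, still lies comfortably below the $1/3$ threshold required by Lemma \ref{lem:2-BITS}. Since $\phi$ is a fixed constant bounded away from $0$ and $1$, the probabilities above and the induced parameter $\phi'$ are all constants, so every quantitative step mirrors the $\phi = 1/2$ case with only constant-factor changes, and the $\Omega(nk)$ bound follows.
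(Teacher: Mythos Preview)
Your proposal is correct and follows essentially the same approach as the paper: the paper's argument for the corollary is simply the sentence preceding it, which says to replace the count $t$ versus $t+1$ by $\lfloor k\phi\rfloor$ versus $\lfloor k\phi\rfloor+1$ and use $\zeta_\phi$ in place of $\zeta_{1/2}$ in the proof of Theorem~\ref{thm:k-MAJ}. You have filled in the details the paper omits (the exact probabilities of Bob's three possible counts, the conditional parameter $\phi' = (m+1)/(k+1)$, and the error-budget bookkeeping), but the structure is identical.
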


We discuss the applications of this lower bound to the heavy-hitter problem in Section \ref{sec:application}.

%\jeff{Another generalization is count-k-\MAJ where we want to know if
%  more than some number $\gamma < k$ bits have a majority.  I think
%  setting $\gamma = k/2$ in the above distribution would give us
%  something like a $\Omega(nk/\log k)$ lower bound.  Setting $\tau$
%  so exactly $k/2$ or $k/2+1$ indices have a majority, might allow us
%  to get an $\Omega(nk)$ lower bound.  But need to think through some
%  conditional probabilities in Fact 3.1.  }   

\section{Graph Connectivity}
\label{sec:CONN}

In the $k$-\CONN problem, we have $k$ players, each having a set of edges in an $n$-vertex graph. The goal is to decide whether the graph consisting of the union of all of these edges is connected. In this section we prove an $\Omega(nk/\log^2 k)$ lower bound for $k$-\CONN in the coordinator model, by performing a symmetry-based reduction from $2$-\DISJ.

\subsection{Proof Idea and the Hard Distribution}

Let us start by discussing the hard distribution. Assume for simplicity $k \ge 100 \log n$. We describe a hard distribution which is not quite the same as the one in the proof (due to technical reasons), but is conceptually clearer. In this hard distribution, we consider a graph $G$, which consists of two disjoint cliques of size $n/2$. We also consider one edge between these two cliques, called the \emph{connector}; the connector is not part of $G$. Each player gets as input $n/10$ edges randomly and uniformly chosen from the graph $G$; furthermore, with probability $1/2$ we choose exactly one random edge in one random player's input, and replace it by the connector. It is easy to see that if one of the players got the connector, then with high probability the resulting set of edges span a connected graph, and otherwise the graph is not connected.

To get convinced that the lower bound holds, notice that the coordinator can easily reconstruct the graph $G$. However, in order to find out if one of the players has received the connector, the coordinator needs to speak with each of the players to find this out. The situation is roughly analogous to the situation in the $k$-OR problem, since the players themselves did not get enough information to know $G$, and no Slepian-Wolf type protocol is possible since the edges received by each player are random-looking. The actual distribution that we use is somewhat more structured than this, in order to allow an easier reduction to the $2$-player disjointness problem.

\subsection{The Proof}
\label{sec:CONN-proof}
We first recall $2$-\DISJ. Similarly to before, in $2$-\DISJ, Alice and Bob have inputs $x\ (\abs{x} = \ell)$ and $y\ (\abs{y} = \ell)$ chosen uniformly at random from $[n]\ (n = 4\ell-1)$, with the promise that with probability $1/10k$, $\abs{x \cap y} = 1$ and with
probability $1 - 1/10k$, $\abs{x \cap y} = 0$. Let $\varphi$ be this
input distribution for $2$-\DISJ. Now given an input $(x,y)$ for
$2$-\DISJ, we construct an input for $k$-\CONN.

Let $K_{2n} = (V,E)$ be the complete graph with $2n$ vertices. Given Alice's input $x$ and Bob's input $y$, we construct $k$ players' input $I_1, \ldots, I_k$ such that $\abs{I_j} = \ell$ and $I_j \subseteq E$ for all $1 \le j \le k$.
We first pick a random permutation $\sigma$ of $[2n]$.
Alice constructs $I_1 = \{(\sigma(2i-1), \sigma(2i))\ |\ i \in x\}.$

Bob constructs $I_2, \ldots, I_k$.
It is convenient to use $\sigma$ and $y$ to divide $V$ into two subsets $L$ and $R$.  For each $i\ (1 \le i \le n)$, if $i \in y$, then with probability $1/2$, we add $\sigma(2i-1)$ to $L$ and $\sigma(2i)$ to $R$; and with the rest of the probability, we add
$\sigma(2i-1)$ to $R$ and $\sigma(2i)$ to $L$. Otherwise if $i \not\in y$, then with probability $1/2$, we add both $\sigma(2i-1)$ and $\sigma(2i)$ to $L$; and with the rest of the probability, we add both $\sigma(2i-1)$ and $\sigma(2i)$ to $R$. Let $K_L = (L, E_L)$ and $K_R = (R, E_R)$ be the two complete graphs on sets of vertices $L$ and
$R$, respectively.
Now using $E_R$ and $E_L$, Bob can construct each $I_j$.
With probability $1-1/10k$, $I_j$ is a random subset of disjoint edges (i.e. a matching) from $E_L \cup E_R$ of size $\ell$;  and with probability $1/10k$, $I_j$ is a random subset of disjoint edges from $E_L \cup E_R$ of size $\ell -1$ and one random edge from $E \setminus (E_L \cup E_R)$.

%Let $M_2^{\ell}, \ldots, M_k^{\ell}$ be subsets of
%{\em disjoint edges} (i.e., matchings) of size $\ell$ picked uniformly
%at random from $E_L \cup E_R$, and $M_2^{\ell-1}, \ldots,
%M_k^{\ell-1}$ be subsets of {\em disjoint edges} of size $\ell-1$
%picked uniformly at random from $E_L \cup E_R$. Let $M_2^1, \ldots,
%M_k^1$ be edges picked uniformly at random from $E \backslash (E_L
%\cup E_R)$, with the constraints that $M_j^1$ is disjoint from all
%edges in $M_j^{\ell-1}$ for each $j\ (2 \le j \le k)$.
%\[
%I_{j} \ (j = 2, \ldots, k)  =  \left\{
%  \begin{array}{rl}
%    M_j^{\ell} & \text{w.p. } 1-1/\alpha k \\
%    M_j^{\ell-1} \cup M_j^{1} & \text{w.p. } 1/\alpha k.
%  \end{array}
%  \right.
%\]
Let $\varphi'$ be the input distribution for $k$-\CONN defined as above for each $I_j\ (1 \le j \le k)$.
We define the following two events.
\begin{itemize}
\item[$\xi_1$:] Both edge-induced subgraphs $\cup_{j=2}^k I_j \bigcap E_L$ and $\cup_{j=2}^k I_j \bigcap E_R$ are connected, and span $L$ and $R$ respectively.
\item[$\xi_2$:] $\cup_{j=2}^k I_j \bigcap E$ is {\em not} connected.
\end{itemize}
It is easy to observe the following two facts by our construction.
\begin{lemma}
\label{lem:symmetric-2}
All $I_j\ (j = 1, \ldots, k)$ are chosen from the same distribution.
\end{lemma}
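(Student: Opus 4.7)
The plan is to follow the strategy of Lemma \ref{lem:AND-symmetric}: since by construction $I_2,\ldots,I_k$ are drawn i.i.d.\ given $(\sigma,y)$, they share a common marginal distribution, so it suffices to prove that $I_1$ has the same marginal distribution. I will argue that each $I_j$ is, marginally, the uniform distribution over matchings of size $\ell$ in $K_{2n}$.

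For $I_1$, I would fix $x$ and count directly. For any matching $M$ of size $\ell$ in $K_{2n}$, the number of permutations $\sigma\in S_{2n}$ satisfying $\{(\sigma(2i-1),\sigma(2i)) : i\in x\} = M$ is exactly $\ell!\cdot 2^\ell\cdot (2n-2\ell)!$ (choose a bijection between $x$ and the edges of $M$, an orientation for each paired edge, and an arbitrary bijection for the $2n-2\ell$ remaining positions of $\sigma$). Since this count does not depend on $M$, $I_1$ is uniform over size-$\ell$ matchings conditionally on $x$, and therefore also after averaging over the marginal of $x$ under $\varphi$.

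For $I_j$ with $j\ge 2$, I would use a vertex-relabeling symmetry. The construction of $(L,R,I_j)$ from $(y,\sigma,\text{Bob's fresh coins})$ interacts with the vertex set $[2n]$ only via $\sigma$, so post-composing $\sigma$ by any $\pi\in S_{2n}$ produces the equidistributed tuple $(\pi(L),\pi(R),\pi(I_j))$; hence the marginal law of $I_j$ is $S_{2n}$-invariant. Since $S_{2n}$ acts transitively on matchings of size $\ell$ in $K_{2n}$, this marginal must be uniform. Combined with the previous paragraph, $I_1$ and each $I_j$ with $j \ge 2$ have identical marginal distributions, as required.

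The main subtlety I anticipate is that the construction of $I_j$ for $j\ge 2$ singles out ``crossing'' from ``non-crossing'' edges relative to $(L,R)$ with probabilities $1/10k$ and $1-1/10k$, which at first glance suggests a non-uniform marginal weighted by how likely a given edge is to fall on one side. The symmetry argument sidesteps this cleanly, because the crossing/non-crossing status is itself a property of $(L,R,I_j)$ that transforms equivariantly under $S_{2n}$; no explicit case analysis over edge types is needed once the $S_{2n}$-invariance and transitivity on size-$\ell$ matchings are in hand.
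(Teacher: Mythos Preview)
Your argument is correct but takes a different route from the paper. The paper (mirroring its proof of Lemma~\ref{lem:AND-symmetric}) argues by conditioning: it observes that, relative to the partition $(L,R)$, Alice's $I_1$ consists of the edges $(\sigma(2i-1),\sigma(2i))$ for $i\in x$, which lie in $E_L\cup E_R$ exactly when $i\notin y$; hence with probability $1-1/10k$ (the event $x\cap y=\emptyset$) $I_1$ is a size-$\ell$ matching inside $E_L\cup E_R$, and with probability $1/10k$ it has $\ell-1$ such edges together with the single crossing edge $(\sigma(2z-1),\sigma(2z))$, $z=x\cap y$. Since $\sigma$ is uniform, this is precisely Bob's sampling rule for $I_j$, $j\ge 2$. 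You instead identify both marginals explicitly as the uniform law on size-$\ell$ matchings in $K_{2n}$, via a counting argument for $I_1$ and an $S_{2n}$-equivariance/transitivity argument for $I_j$. Your route is cleaner and sidesteps any conditional bookkeeping; the observation in your last paragraph that the $1/10k$ weighting is absorbed by the $S_{2n}$-symmetry is exactly right. What the paper's route buys, if carried through, is a bit more than the lemma literally asserts: matching the \emph{conditional} laws given $(L,R)$ shows that $I_1,\ldots,I_k$ are conditionally i.i.d.\ given $(L,R)$, hence fully exchangeable, and it is this exchangeability (not merely equal marginals) that underlies the $O(C/k)$ expected-communication step in Lemma~\ref{lem:conn-reduction}.
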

\begin{proof}
Since $\sigma$ is a random permutation of $[2n]$, according to the
distribution $\varphi'$, Alice's input can also seen as follows: With
probability $1 - 1/10k$, it is a random matching of size $\ell$ from $E_L
\cup E_R$; and with probability $1/10k$, it is a matching consists of
$\ell-1$ random edges from $E_L \cup E_R$ and one random edge from $E
\backslash (E_L \cup E_R)$, which is $(\sigma(2z-1), \sigma(2z))$ where $z
= x \cap y$.  
\end{proof}

\begin{lemma}
\label{lem:connected}
$\xi_1$ happens with probability at least $1-1/2n$ when $k \geq 68 \ln n + 1$.
\end{lemma}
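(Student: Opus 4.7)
The plan is to prove $\Pr[\neg\xi_1] \le 1/(2n)$ by a standard concentration-plus-union-bound argument; by symmetry of $L$ and $R$, I will bound the probability that $\cup_{j=2}^k I_j \cap E_L$ fails to be connected and spanning on $L$, and then double. First, note that $|L|$ is a sum of $n$ independent contributions (each $i \in [n]$ deterministically adds $1$ to $|L|$ if $i \in y$, and $0$ or $2$ with equal probability otherwise), so $\E[|L|] = n$ and Hoeffding gives $|L|, |R| \in [0.9n, 1.1n]$ with probability $1 - O(n^{-4})$. Condition on this good event in what follows.

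Next, fix any $S \subseteq L$ with $|S| = s$ and $1 \le s \le |L|/2$. For a single matching $I_j$, the intersection $I_j \cap (E_L \cup E_R)$ is a uniformly random matching of size at least $\ell - 1$ in $K_L \cup K_R$ (the at most one cross-partition edge is irrelevant to $\xi_1$). I claim that the probability that $I_j \cap E_L$ has no edge between $S$ and $L \setminus S$ is at most $\exp(-\alpha s)$ for an absolute constant $\alpha > 0$. The quickest justification is negative association of the edge-indicators $\mathbf{1}[e \in I_j]$ for a uniform matching, giving
\[
\prod_{u \in S,\, v \in L \setminus S} \Pr[(u,v) \notin I_j] \;\le\; \exp\!\left(-\frac{c\,s(|L|-s)\,\ell}{n^2}\right) \;\le\; e^{-\alpha s},
\]
since $\Pr[(u,v) \in I_j] = \Theta(\ell/n^2)$ for $u,v \in L$, and $s(|L|-s) \ge s|L|/2 = \Omega(sn)$.

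Since $I_2, \ldots, I_k$ are independent, the probability that no matching yields an edge across $(S, L \setminus S)$ is at most $\exp(-\alpha s(k-1)) \le n^{-68\alpha s}$, using $k - 1 \ge 68 \ln n$. A union bound over all $S$ of size $1 \le s \le |L|/2$---the $s = 1$ case captures the spanning requirement, since an isolated vertex is exactly an uncrossed singleton cut---then yields
\[
\sum_{s=1}^{|L|/2} \binom{|L|}{s}\, n^{-68\alpha s} \;\le\; \sum_{s \ge 1} \left(\frac{1.1\,e\,n}{s\,n^{68\alpha}}\right)^{\!s} \;\le\; \frac{1}{8n},
\]
provided $\alpha$ exceeds a small absolute constant, which the hypothesis $k \ge 68\ln n + 1$ is calibrated to ensure. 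Doubling for the analogous $R$-side bound and adding the $O(n^{-4})$ size-concentration slop gives $\Pr[\neg\xi_1] \le 1/(2n)$.

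The main obstacle is the per-matching cut estimate, since matching edges are not mutually independent. Negative association of uniform matching indicators is the cleanest way to sidestep this, but if one prefers an elementary proof, sequential edge exposure also works: after $r < \ell$ edges of $I_j$ have been revealed, at most $2\ell \le n/2$ vertices are used, so at least half of $S$ and of $L \setminus S$ remain available (using $s \le |L|/2$ and $|L| \ge 0.9n$), and the conditional probability that the next edge crosses $(S, L \setminus S)$ within $L$ stays $\Omega(s/n)$ throughout; multiplying these bounds over $\Omega(\ell) = \Omega(n)$ revealed edges again gives the exponential decay, and the rest of the argument is unchanged.
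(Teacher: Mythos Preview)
Your approach is genuinely different from the paper's. The paper proves connectivity by an iterative component-growing argument: first show every vertex has degree $\Omega(\log n)$ using the first $28\ln n$ rounds of matchings, then repeatedly apply an expansion step to grow a connected set to size $n/10$, and finally use the last $40\ln n$ rounds to absorb the remaining vertices. Your route---show that every cut $(S,L\setminus S)$ is crossed by some $I_j$ and union-bound over cuts---is the other standard way to prove random-graph connectivity, and it is conceptually cleaner since it avoids the multi-phase structure. The paper's argument, on the other hand, is completely elementary (only Chernoff bounds), whereas yours leans on negative association of the edge indicators of a uniform fixed-size matching. That fact is true (the multivariate matching polynomial is real stable by Heilmann--Lieb, so the uniform size-$\ell$ matching is strongly Rayleigh and hence NA), but it is not lightweight and should be cited rather than asserted.

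Two places in your write-up need repair. First, the sequential-exposure alternative is not correct as stated: after $r<\ell$ edges are revealed you have used up to $2\ell\approx n/2$ vertices, but those could all lie in $L$, and for small $s$ they could swallow all of $S$; so the claim ``at least half of $S$ remains available'' fails. A correct version would also track how many revealed edges lie entirely inside $S$, or simply drop this alternative and rely on NA. Second, two constants are asserted rather than checked: you need $\Pr[e\in I_j]\ge c\,\ell/n^2$ for $e\in E_L$, which requires showing that a uniform size-$\ell$ matching in $K_L\cup K_R$ places $\Omega(\ell)$ edges in $E_L$ when $|L|,|R|\in[0.9n,1.1n]$---plausible, but not immediate from symmetry since $|L|\ne|R|$---and you need the resulting $\alpha$ to satisfy $68\alpha>2$ so that the $s=1$ term $|L|\cdot n^{-68\alpha}\le 1/(8n)$. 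The constant $68$ was tuned for the paper's phase-based argument, not yours, so you should actually compute your $\alpha$ rather than declare that $68$ was ``calibrated'' for it.
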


%\qin{Sketched proof will be used in the 10-page abstract}
%\begin{proof} (This is a proof sketch; full proof in Appendix \ref{sec:CONN-proof}.)
%First note that by our construction, both $\abs{L}$ and $\abs{R}$ are $\Omega(n)$ with high probability. To locally simplify notation, we consider a graph $(V,E)$ of $n$ nodes
%where edges are drawn in $(k - 1) \ge 68\ln n$ rounds, and each round $n/4$ disjoint
%edges are added to the graph.  If $(V,E)$ is connected with
%probability $(1-1/4n)$, then by union bound over $\cup_{j=2}^k I_j
%\bigcap E_L$ and $\cup_{j=2}^k I_j \bigcap E_R$, $\xi_1$ is true with
%probability $(1-1/2n)$. The proof follows four steps.
%\begin{itemize}
%\item[(S1)] We show that all points have degree at least $8\ln n$ with probability at least $1 - 1/12n$; this uses the first $28 \ln n$ rounds.
%\item[(S2)] We show (conditioned on (S1)) that any subset $S \subset V$ of $h < n/10$ points is connected to at least $\min\{h \ln n, n/10\}$ distinct points in $V \setminus S$, with probability at least $1-1/n^2$.
%\item[(S3)] We can iterate (S2) $\ln n$ times to show that there must be a single connected component $S_G$ of size at least $n/10$, with probability at least $1-1/12n$.
%\item[(S4)] We can show (conditioned on (S3) and using the last $40 \ln n$ rounds) that all points are connected to $S_G$ with probability at least $1-1/12n$.  \qedhere
%\end{itemize}
%\end{proof}

\begin{proof}
 (This is a proof sketch; full proof in Appendix \ref{sec:CONN-proof-appendix}.)
First note that by our construction, both $\abs{L}$ and $\abs{R}$ are $\Omega(n)$ with high probability. To locally simplify notation, we consider a graph $(V,E)$ of $n$ nodes
where edges are drawn in $(k - 1) \ge 68\ln n$ rounds, and each round $n/4$ disjoint
edges are added to the graph.  If $(V,E)$ is connected with
probability at least $(1-1/4n)$, then by union bound over $\cup_{j=2}^k I_j
\bigcap E_L$ and $\cup_{j=2}^k I_j \bigcap E_R$, $\xi_1$ is true with
probability at least $(1-1/2n)$. The proof follows four steps.
\begin{itemize}
\item[(S1)] Using the first $28 \ln n$ rounds, we can show that all vertices have degree at least $8\ln n$ with probability at least $1 - 1/12n$.
\item[(S2)] Conditioned on (S1), any subset $S \subset V$ of $h < n/10$ vertices is connected to at least $\min\{h \ln n, n/10\}$ distinct vertices in $V \setminus S$, with probability at least $1-1/n^2$.
\item[(S3)] Iterate (S2) $\ln n$ times to show that there must be a single connected component $S_G$ of size at least $n/10$, with probability at least $1-1/12n$.
\item[(S4)] Conditioned on (S3), using the last $40 \ln n$ rounds we can show that all vertices are connected to $S_G$ with probability at least $1-1/12n$.   \qedhere
\end{itemize}
\end{proof}

\noindent The following lemma shows the properties of our reduction.
\begin{lemma}
\label{lem:conn-reduction}
Assume $k \ge 100 \log n$. If there exists a protocol $\cal P'$ for $k$-\CONN on input distribution
$\varphi'$ with communication complexity $C$ and error bound $\eps$, then there exists a protocol $\cal P$ for the
$2$-\DISJ on input distribution $\varphi$ with expected communication
complexity $O(C/k \cdot \log k)$ and error bound $(29 \ln k \cdot \eps + 1/2000k)$.
\end{lemma}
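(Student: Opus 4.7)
The plan is to perform symmetrization on $\cal P'$. Using public randomness, Alice and Bob pick a uniform index $i \in [k]$, the permutation $\sigma$, and all the randomness needed to sample the other players' inputs; then Alice sets $I_i = \{(\sigma(2j-1), \sigma(2j)) : j \in x\}$ from her $2$-\DISJ input, while Bob builds every $I_j$ with $j \neq i$ from $y$ and the public bits, exactly as in the definition of $\varphi'$. They simulate $\cal P'$ with Alice playing player $i$ and Bob playing the coordinator together with the remaining $k-1$ players; the only bits that must actually cross the Alice--Bob channel are those exchanged between player $i$ and the coordinator inside $\cal P'$.

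By Lemma~\ref{lem:symmetric-2} the marginals of all $k$ players agree, and since $i$ is uniform, a single run of the simulation costs Alice and Bob only $2C/k$ bits in expectation. To invoke Lemma~\ref{lem:2-DISJ} at $t = 10k$ the final two-party error has to be of order $1/k$, so I first amplify $\cal P'$ by running it $\Theta(\log k)$ independent times with fresh public randomness and taking the majority vote, driving its error from $\eps$ down to $1/\mathrm{poly}(k)$. This amplification is the sole source of the extra $\log k$ factor, yielding expected communication $O((C/k)\log k)$.

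For the output rule, Bob declares ``intersect'' iff the boosted simulation answers ``connected'' \emph{and} none of the $k-1$ players Bob constructed received a cross edge from $E \setminus (E_L \cup E_R)$; the latter condition Bob can check for free since he built those players himself. Correctness of this rule rests on the observation that, conditional on Bob using no cross edge, the reconstructed graph is connected precisely when Alice contributed a cross edge, i.e.\ when $x \cap y \neq \emptyset$, provided the event $\xi_1$ of Lemma~\ref{lem:connected} holds.

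The error of $\cal P$ then decomposes into three pieces. First, the boosted simulation errs; conditioning on the constant-probability event that Bob uses no cross edge loses at most a multiplicative factor of $2$, contributing the $2\eps$ term. Second, $\xi_1$ fails, which by Lemma~\ref{lem:connected} happens with probability at most $1/2n$, absorbed into the $1/2000k$ slack in our parameter regime. Third, the ``bad coincidence'' that $x \cap y = 1$ and yet some $I_j$ with $j \neq i$ contains a cross edge has probability at most $(1/10k)\bigl(1 - (1-1/10k)^{k-1}\bigr) = O(1/k^2)$ for large $k$, comfortably within the $1/2000k$ budget. Summing gives the claimed $2\eps + 1/2000k$. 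The main obstacle is that $\{x \cap y = 1\}$ carries probability only $1/(10k)$ under $\varphi$, so in the worst case the error of $\cal P'$ could be concentrated on this rare slice; amplifying $\cal P'$ before symmetrization is exactly what rules this out, and is the fundamental reason for the extra $\log k$ in the communication.
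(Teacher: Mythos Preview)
Your proposal has two genuine gaps, and they are related.

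First, the amplification step does not do what you claim. The hypothesis is that $\cal P'$ has error $\eps$ \emph{on the input distribution} $\varphi'$; nothing prevents $\cal P'$ from being deterministic, in which case running it $\Theta(\log k)$ times on the same instance and taking a majority vote returns exactly the same answer every time. Even if $\cal P'$ is randomized, inputs on which its error exceeds $1/2$ can have total mass up to $2\eps$, and majority voting does nothing for them. So the sentence ``driving its error from $\eps$ down to $1/\mathrm{poly}(k)$'' is simply false in this distributional setting, and with it your stated justification for the $\log k$ factor collapses.

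Second, your arithmetic for the ``bad coincidence'' is off by a factor of $k$. You write $(1/10k)\bigl(1-(1-1/10k)^{k-1}\bigr)=O(1/k^2)$, but $(1-1/10k)^{k-1}\to e^{-1/10}\approx 0.905$, so the product is $\approx 0.0095/k=\Theta(1/k)$, which is roughly $1/(100k)$ and does \emph{not} fit inside the $1/2000k$ budget. This is exactly the scenario your output rule mishandles: when $x\cap y\neq\emptyset$ but one of Bob's players happens to get a cross edge, you output ``not intersect'' regardless of what $\cal P'$ says, and no amount of boosting $\cal P'$ can repair that.

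The paper handles both issues at once by repeating the \emph{reduction} (not the protocol) $c\log k$ times with fresh randomness for $\sigma$, the $L/R$ partition, and Bob's $k-1$ players, while keeping $(x,y)$ fixed. Since the events $\xi_1$ and $\xi_2$ depend only on Bob's players, Bob can locally test which of the $c\log k$ generated instances are ``good''; with probability at least $1-1/2000k$ at least one is. He then runs $\cal P'$ once on a random good instance. Because a random good instance is distributed as $\varphi'$ conditioned on the event $\{\xi_1\wedge\xi_2\}$ of mass at least $8/9$, a Markov argument bounds $\cal P'$'s conditional error by $2\eps$. That is the true source of the $\log k$ factor and of the $2\eps+1/2000k$ error bound.
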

\begin{proof}

In $\cal P$, Alice and Bob first construct $\{I_1, \ldots, I_k\}$ according to our reduction, and then run the protocol $\cal P'$ on it. By Lemma~\ref{lem:connected} we have that $\xi_1$
holds with probability at least $1 - 1/2n$. And by our
construction, conditioned on that $\xi_1$ holds, $\xi_2$
holds with probability at least $(1-1/10k)^{k-1} \ge 1 - 1/10$. Thus the input generated by a random reduction encodes the $2$-\DISJ problem
with probability at least $(1 - 1/2n - 1/10) > 8/9$. We call such an input a {\em good} input. We repeat the random reduction $c \ln k$ times (for some large enough constant $c$; e.g., $c\geq 29$) and run $\cal P'$ on each of the resulting inputs for $k$-\CONN. The probability that at least $2/3$-fraction of inputs are good is at least $1 - 1/2000k$ (by picking $c$ large enough; a Chernoff bound shows this probability is at most $2\exp(-2 (2/9)^2 (c \ln k))$ which is less than $1/(2000 k)$ with $c\geq 29$ and $k \geq 100$). Thus we can output the majority of the $c\ln k$ runs of $\cal P'$, and obtain a protocol $\cal P$ for $2$-\DISJ with expected communication complexity $O(C/k \cdot \log k)$ and error bound $\eps \cdot 29 \ln k + 1/2000k$; the $\eps \cdot 29 \ln k$ term comes from a union bound over the $\eps$ error on each of $29 \ln k$ runs of $k$-\CONN.
\end{proof}

\noindent Combining Lemma~\ref{lem:2-DISJ} and
Lemma~\ref{lem:conn-reduction}, we have the following theorem.
\begin{theorem}
$D_{\varphi'}^{1/(2000\cdot (29 \ln k) \cdot k)}(\mbox{$k$-\CONN}) = \Omega(nk/\log k)$, for  $k \ge 100 \log n$ in the coordinator model.
%$D_{\varphi'}^{1/3}(\mbox{$k$-\CONN}) = \Omega(nk/\log k)$
\end{theorem}
\begin{proof}
%If there exists a protocol $\cal P'$ that computes $k$-\CONN with communication complexity $o(nk/\log k)$ and error $1/3$, then by Lemma~\ref{lem:OR-DISJ} there exists a protocol $\cal P$ that computes $2$-\DISJ with expected communication complexity $o(n)$ and error $1/2000k (= 1/100\alpha k)$, contradicting Theorem~\ref{thm:2-DISJ} (when $t = \alpha k$).
If there exists a protocol $\cal P'$ that computes $k$-\CONN with
communication complexity $o(nk/\log k)$ and error $1/(2000\cdot (29 \ln k) \cdot k)$, then by
Lemma~\ref{lem:OR-DISJ} there exists a protocol $\cal P$ that computes
$2$-\DISJ with expected communication complexity $o(n)$ and error at most
 $(1/(2000\cdot (29 \ln k) \cdot k) \cdot (29 \ln k)) + 1/2000k = 1/1000k$, contradicting Lemma~\ref{lem:2-DISJ} (when $t = 10k$).  
\end{proof}

% Finally, we have $R^{1/3}(\mbox{$k$-\CONN}) \ge \Omega(D_{\varphi'}^{1/3}
% (\mbox{$k$-\CONN}))  \ge \Omega(nk/\log k)$ as an immediate consequence.

Finally, we have the following immediate consequence.
\begin{eqnarray*}
R^{1/3}(\mbox{$k$-\CONN})  
&\ge&  \Omega(R^{1/(2000\cdot (29 \ln k) \cdot k)}(\mbox{$k$-\CONN}) / \log k) \\ 
&\ge& \Omega(D_{\varphi'}^{1/(2000\cdot (29 \ln k) \cdot k)}(\mbox{$k$-\CONN})/\log k)  \\
&\ge& \Omega(nk/\log^2 k)
\end{eqnarray*}

%\smallskip
%The hardness of the graph connectivity problem may indicate that most graph problems in the coordinator model or message-passing model do not have efficient solutions other than trivial ones.

\section{Some Direct-Sum-Like Results}
\label{sec:directsum}
Let $f : {\cal X} \times {\cal Y} \to {\cal Z}$ be an arbitrary function. Let
$\mu$ be a probability distribution over ${\cal X} \times {\cal Y}$. Consider a setting where we have
$k+1$ players: Carol, and $P_1, P_2, \ldots, P_k$. Carol receives an input from $x \in {\cal X}$ and each $P_i$ receives an input $y_i \in {\cal Y}$. Let $R^{\eps}(f^k)$ denote the randomized communication complexity of computing
$f$ on Carol's input and each of the $k$ other players' inputs
respectively; i.e., computing $f(x, y_1), f(x, y_2), \ldots, f(x,
y_k)$. Our direct-sum-like theorem in the message-passing model states
the following.

\begin{theorem}
\label{thm:directsum}
In the message-passing model, for any function $f : {\cal X}
\times {\cal Y} \to {\cal Z}$ and any distribution $\mu$ on ${\cal
  X} \times {\cal Y}$, we have $R^{\eps}(f^k) \ge \Omega(k \cdot
\ED^{\eps}_{\mu}(f))$.
\end{theorem}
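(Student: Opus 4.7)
The plan is to apply the symmetrization technique, this time reducing $k$ copies of $f$ (with Carol's input shared) to the two-party version of $f$ under $\mu$. First, I pick a hard distribution $\nu$ for $f^k$: draw $x \sim \mu_X$ (the $\mathcal{X}$-marginal of $\mu$) and, conditioned on $x$, draw $y_1, \ldots, y_k$ i.i.d.\ from $\mu_{Y \mid X = x}$. By construction, each marginal $(x, y_i)$ is distributed as $\mu$, and the joint distribution is symmetric under permutations of $\{P_1, \ldots, P_k\}$. By Yao's minimax, $R^{\eps}(f^k) \ge D^{\eps}_{\nu}(f^k)$, so it suffices to lower-bound the distributional complexity under $\nu$.

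Given any protocol $\mathcal{P}$ for $f^k$ with communication $C$ and error $\eps$ under $\nu$, I would construct a two-party protocol $\mathcal{P}'$ for $f$ under $\mu$ between Alice (holding $x$) and Bob (holding $y$), as follows. Using public randomness, Alice and Bob draw $i \in [k]$ uniformly; Alice simulates Carol (with her input $x$) and every $P_j$ for $j \ne i$, using public randomness to sample $y_j \sim \mu_{Y \mid X = x}$ (she can do this since she knows $x$), while Bob simulates $P_i$ with $y_i := y$. Messages internal to Alice's simulated players incur no real communication; only messages in which $P_i$ is sender or receiver are actually transmitted between Alice and Bob. Since $\mathcal{P}$'s $i$-th output equals $f(x, y_i)$ with probability $\ge 1 - \eps$, $\mathcal{P}'$ solves $f$ on $(x,y) \sim \mu$ with error at most $\eps$.

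For the cost, I would use the standard averaging over the symmetric players: let $a$ be the expected communication (under $\nu$, $\mathcal{P}$) between an arbitrary pair $(P_j, P_{j'})$ and $b$ the expected communication between Carol and an arbitrary $P_j$; by symmetry these do not depend on the indices. Then $C = \binom{k}{2} a + k b$, while the expected cost of $\mathcal{P}'$ (over the choice of $i$) is $(k-1) a + b \le 2C/k$. Since $\mathcal{P}'$ solves $f$ with error $\eps$ on input from $\mu$, its expected cost is at least $\E[D^{\eps}_{\mu}(f)]$, yielding $C \ge (k/2) \cdot \E[D^{\eps}_{\mu}(f)] = \Omega(k \cdot \E[D^{\eps}_{\mu}(f)])$.

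The one subtle point, which I expect to be the main technical annoyance rather than a conceptual obstacle, is the bookkeeping in the message-passing model: when Bob's $P_i$ sends a message to some $P_j$ (with $j \ne i$), Alice must learn $j$ to update the correct simulated player, nominally costing an extra $\log k$ bits per message. I would handle this either by charging the address bits to the original protocol (they are implicit in the message-passing model's specification) or, alternatively, by first simulating the message-passing protocol in the coordinator model and then carrying out the symmetrization there, so that addressing is free inside the simulation. Either way the $\Omega(k)$ factor survives.
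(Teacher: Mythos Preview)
Your proposal is correct and follows essentially the same route as the paper: you choose the same symmetric distribution $\nu$ (draw $x$, then $y_1,\ldots,y_k$ i.i.d.\ from $\mu_{Y\mid X=x}$), have Alice simulate Carol together with all but one randomly chosen $P_i$ while Bob plays $P_i$, and use symmetry to bound the simulated cost by $2C/k$. Your explicit $a,b$ bookkeeping is just a more detailed version of the paper's one-line ``$2/k$ fraction'' argument, and your remark about the $\log k$ addressing overhead is a side issue the paper acknowledges elsewhere but does not treat in this proof.
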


Note that this is not a direct-sum theorem in the strictest sense, since it relates randomized communication complexity to expected distributional complexity. However, it should probably be good enough for most applications.
The proof of this theorem is dramatically simpler than most direct-sum proofs known in the literature (e.g.~\cite{direct_sum}). This is not entirely surprising, as it is weaker than those theorems: it deals with the case where the inputs are spread out over many players, while in the classical direct-sum setting, the inputs are only spread out over two players (this would be analogous to allowing the players $P_i$ to communicate with each other for free, and only charging them for speaking to Carol). However, perhaps more surprisingly, \emph{optimal} direct-sum results are \emph{not known} for most models, and are considered to be central open questions, while the result above is essentially optimal. Optimal direct-sum results in $2$-player models would have dramatic consequences in complexity theory (see e.g.\
\cite{direct_sum_circuit} as well as \cite{patrascu_conj}), so it seems interesting to check whether direct-sum results in multiparty communication could suffice for achieving those complexity-theoretic implications.

\begin{proof}
Given the distribution $\mu$ on ${\cal X} \times {\cal Y}$, we
construct a distribution $\nu$ on ${\cal X} \times {\cal Y}^k$. Let
$\rho_x$ be the marginal distribution on ${\cal Y}$ induced by $\mu$
condition on $X = x$.  We first pick $(x, y_1) \in {\cal X} \times
{\cal Y}$ according to $\mu$, and then pick $y_2, \ldots, y_k$
independently from ${\cal Y}$ according to $\rho_x$ . We show that
$D_{\nu}^{\eps}(f^k) \ge \Omega(k \cdot \ED_{\mu}^{\eps}(f))$. The
theorem follows by Yao's min-max principle.

Suppose that Alice and Bob get inputs $(u, w)$ from ${\cal X} \times
{\cal Y}$ according to $\mu$. We can use a protocol for $f^k$ to
compute $f(u,w)$ as follows: Bob simulates a random player in
$\{P_1, \ldots, P_k\}$. W.l.o.g, say it is $P_1$. Alice simulates
Carol and the remaining $k-1$ players. The inputs for Carol and $P_1,
\ldots, P_k$ are constructed as follows: $x = u$, $y_1 = w$ and $y_2,
\ldots, y_k$ are picked from $\cal Y$ according to $\rho_u$ (Alice
knows $u$ and $\mu$ so she can compute $\rho_u$). Let $\nu$ be the
distribution of $(x, y_1, \ldots, y_k)$ in this construction. We now
run the protocol for $f^k$ on $x, y_1, \ldots, y_k$. The result also
gives $f(u, w)$.

Since $y_1, \ldots, y_k$ are chosen from the same distribution and
$P_1$ is picked uniformly at random from the $k$ players other than
Carol, we have that in expectation, the expected amount of
communication between $P_1$ and $\{$Carol, $P_2, \ldots, P_k \}$, or
equivalently, the communication between Alice and Bob according to our
construction, is at most a $2/k$ fraction of the total communication
of the $(k+1)$-player game. Thus $D_{\nu}^{\eps}(f^k) \ge \Omega(k \cdot
\ED_{\mu}^{\eps}(f))$.  
\end{proof}

\subsection{With Combining Functions}

In this section we extend Theorem~\ref{thm:directsum} to the
complexity of the AND/OR of $k$ copies of $0/1$ function
$f$. Similar to before, AND and OR are essentially the same so we only
talk about OR here. Let $R^{\eps}(f^k_{\mathrm{OR}})$ denote the
randomized communication complexity of computing $f(x, y_1) \vee
f(x, y_2) \vee \ldots \vee f(x, y_k)$.

\begin{theorem}
\label{thm:directsum-OR}
In the message-passing model, for any function $f : {\cal X}
\times {\cal Y} \to \{0,1\}$ and every distribution $\mu$ on ${\cal X}
\times {\cal Y}$ such that $\mu(f^{-1}(1)) \le 1/10k$, we have
$R^{1/3}(f^k_{\mathrm{OR}}) \ge \Omega(k/\log^2(1/\eps) \cdot
\ED^{\eps}_{\mu}(f))$.
\end{theorem}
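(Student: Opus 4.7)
The plan is to mirror the proof of Theorem~\ref{thm:directsum} while absorbing the OR combining via amplification. Define the hard distribution $\nu$ on ${\cal X} \times {\cal Y}^k$ exactly as in that proof: sample $(x, y_1) \sim \mu$ and then $y_2, \ldots, y_k \sim \rho_x$ i.i.d. The marginal of each pair $(x, y_i)$ under $\nu$ is $\mu$, so the assumption $\mu(f^{-1}(1)) \le 1/(10k)$ combined with a union bound yields $\Pr_\nu[\bigvee_i f(x,y_i) = 1] \le 1/10$; that is, under $\nu$ the true OR output is $0$ with probability at least $9/10$. By Yao's minimax, $R^{\eps/4}(f^k_{\mathrm{OR}}) \ge D^{\eps/4}_\nu(f^k_{\mathrm{OR}})$, so let $\cal P'$ be any protocol achieving distributional error $\eps/4$ on $\nu$ with communication $C$.

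From $\cal P'$ I would build a two-party protocol $\cal P$ for $f$ on $\mu$ that runs $T = \Theta(\log(1/\eps))$ independent symmetrization trials. Given $(u,w) \sim \mu$ to Alice and Bob, in each trial they pick a uniform index $j^{(t)} \in [k]$ via public coins; Bob simulates player $P_{j^{(t)}}$ with input $w$; Alice simulates Carol and the remaining $k-1$ players by sampling their inputs i.i.d.\ from $\rho_u$ (she knows $u$ and $\mu$, hence $\rho_u$). The induced input to the simulated $k$-party protocol is distributed exactly as $\nu$, so the $\eps/4$ error guarantee of $\cal P'$ applies, and by symmetry across $j^{(t)}$ the expected communication between Alice and Bob in each trial is at most $2C/k$; summing over the $T$ trials gives $O(C\log(1/\eps)/k)$ total expected communication, which is the right budget.

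The final output would be the majority vote of $r_1, \ldots, r_T$. Writing $\delta_{u,w}$ for the conditional error of $\cal P'$ given $(u,w)$ (so $\E_\mu[\delta_{u,w}] \le \eps/4$) and $q_u = \Pr_{y \sim \rho_u}[f(u,y) = 1]$ (so $\E_\mu[q_u] \le 1/(10k)$), each $r_t$ is a Bernoulli variable with mean at least $1 - \delta_{u,w}$ when $f(u,w) = 1$ and at most $(k-1)q_u + \delta_{u,w}$ when $f(u,w) = 0$. On the ``typical'' event that both $(k-1)q_u$ and $\delta_{u,w}$ are bounded away from $1/2$ by a constant, a Chernoff bound over the $T$ independent trials drives the majority-vote error to $\exp(-\Omega(T)) \le \eps/2$ for the right choice of the implicit constant in $T$.

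The main obstacle will be the ``atypical'' regime where $q_u$ or $\delta_{u,w}$ is large, since a naive Markov application on the tails only gives constant probabilities and therefore only constant overall error. My plan is to split the $\eps$ error budget: bound the $\mu$-mass of atypical inputs using the averaged inequalities $\E_\mu[q_u] \le 1/(10k)$ and $\E_\mu[\delta_{u,w}] \le \eps/4$, exploit the observation that inputs with large $q_u$ have $f(u,w) = 1$ with boosted probability (so misclassifying them as $1$ contributes only proportional to their measure), and combine these with the Chernoff decay on the good event. Balancing the three contributions with $T = \Theta(\log(1/\eps))$ should yield an overall expected error of at most $\eps$, completing the claimed lower bound.
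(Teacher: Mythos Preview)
Your overall framework---construct $\nu$, symmetrize, repeat $T=\Theta(\log(1/\eps))$ times, charge $2C/k$ expected communication per trial---matches the paper. The gap is in how you combine the $T$ outputs: majority vote cannot achieve error $\eps$ here, and your proposed rescue for the atypical regime does not go through.

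Here is a concrete obstruction. Take $u$ with $q_u = c/(k-1)$ for any constant $c>\ln 2$. Since $\E_\mu[q_u]\le 1/(10k)$, Markov allows such $u$ to carry $\mu$-mass as large as $\Theta(1/c)$, a fixed constant. Conditioned on such $u$, we have $f(u,w)=0$ with probability $1-q_u \approx 1$, yet in each trial the OR of the other $k-1$ copies equals $1$ with probability $1-(1-q_u)^{k-1}\approx 1-e^{-c}>1/2$. Hence each $r_t$ is a Bernoulli with mean strictly above $1/2$ (even if ${\cal P}'$ is error-free), and the majority of $T$ independent such trials is $1$ with probability tending to $1$ as $T$ grows. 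So on a constant-mass set of inputs with $f(u,w)=0$ your protocol outputs $1$ almost surely; the overall error is a fixed constant, regardless of $T$.

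Your rescue (``large $q_u$ implies $f(u,w)=1$ with boosted probability'') conflates two scales. The problematic regime is $(k-1)q_u=\Theta(1)$, i.e.\ $q_u=\Theta(1/k)$; there $\Pr[f(u,w)=1\mid u]=q_u=\Theta(1/k)$ is tiny, not boosted, so misclassifying these inputs as $1$ is a genuine constant-probability error.

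The missing idea, which the paper exploits, is that Alice \emph{can certify} each trial: she knows $u$ and the $k-1$ inputs she generated, so she can compute $f(u,y_i)$ for every simulated player and declare the trial ``good'' exactly when all of these are $0$ (so that $f^k_{\mathrm{OR}}=f(u,w)$). The paper's protocol runs $T=O(\log(1/\eps))$ trials, discards the bad ones, and returns the output of ${\cal P}'$ on a random good trial. With this verify-then-select rule, the $\eps/2$ from Markov on ${\cal P}'$'s error over good trials plus the $\eps/2$ from failing to see any good trial give total error at most $\eps$. Replacing your majority vote by this step fixes the argument.
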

\begin{proof}
 The
reduction is the same as that in the proof of
Theorem~\ref{thm:directsum}. Note that if $\mu(f^{-1}(1)) \le 1/10k$, then with probability $(1
- 1/10k)^{k-1} \ge 0.9$, we have $f^k_{\mathrm{OR}}(x, y_1, \ldots,
y_k) = f(u,w)$. Similar to the proof of Lemma~\ref{lem:conn-reduction}, we can repeat the reduction for $c \log (1/\eps)$ times for some large enough constant $c$, and then the majority of $f^k_{\mathrm{OR}}(x, y_1, \ldots,
y_k)$'s will be equal to $f(u,w)$ with probability $1 - \eps/2$.
Therefore if we have a protocol for $f^k_{\mathrm{OR}}$ with error probability $\eps/(2c \cdot \log 1/\eps)$ under $\nu$ and communication complexity $C$, then we have a
protocol for $f$ with error probability $(c \log 1/\eps) \cdot (\eps/(2c \log 1/\eps)) + \eps/2) = \eps$ under $\mu$ and  expected communication complexity
$O(\log(1/\eps) \cdot C/k)$.  Consequently, $R^{1/3}(f^k_{\mathrm{OR}}) \ge \Omega\left(R^{\eps/(2c \log 1/\eps)}(f^k_{\mathrm{OR}}) / \log(1/\eps)\right) \ge  \Omega\left(D^{\eps/(2c \log 1/\eps)}_\nu(f^k_{\mathrm{OR}}) / \log(1/\eps)\right) \ge \Omega\left(k/\log^2(1/\eps) \cdot
\ED^{\eps}_{\mu}(f)\right)$.
\end{proof}

%It would be interesting to generalize this to other combining functions such as majority.

%One can also use symmetrization to prove similar direct-sum results for other settings. One such setting is when there are $2k$ players: $k$ players that get $x_1,\ldots,x_k$ and $k$ players that get $y_1,\ldots,y_k$, and the goal is to compute $f(x_1, y_1),$ $f(x_2, y_2), \ldots, f(x_k, y_k)$. Another setting is the same except that there are only $k+1$ players, and Carol receives all of the inputs $x_1,\ldots,x_k$. We omit the details here. Versions of direct-sum in the blackboard model are also possible for some of these settings.

%\todo{A direct-sum theorem for MAJ?}

%\qin{It seems that if $f^{-1}(1) = \tau$, what we really can distinguish is $k\tau$ VS $k\tau +1$, instead of MAJ. But I don't think this is very interesting. What do you think?}

%\qin{I think we should put a remark here, pointing out the main difference between the direct-sum and the $k$-player communication complexity. Roughly speaking, in the the $k$-player communication  complexity, after choosing the $2$-player game to reduce from, we  have to make sure that the random $k-1$ players created by Alice can ``reconstruct" Alice's original input. More preciously, the union of  the random $k-1$ players' inputs together with the result of the  $k$-player game can tell a lot about Bob's input.}

%\section{Distributed Sensing Lowerbounds (of $\eps$-Kernels)}
\section{Applications}
\label{sec:application}

We now consider applications where multiparty communication complexity lower bounds such as ours are needed.  As mentioned in the introduction, our multiparty communication problems are strongly motivated by research on the server-site model and the more general distributed streaming model.
We discuss three problems here:
the heavy hitters problem, which asks to find the approximately most frequently occurring elements in a set which is distributed among many clients;
the distinct elements problems, which lists the distinct elements from a fixed domain where the elements are scattered across distributed databases possibly with multiplicity;
and the $\eps$-kernel problem, which asks to approximate the convex hull of a set which is distributed over many clients.

%%%%%%%%%%%%%%%%%%%%%%%%%%%%%
\paragraph{Distinct elements.}
Consider a domain of $n$ possible elements and $k$ distributed databases each which contains a subset of these elements.  The exact distinct elements problem is to list the set of all distinct elements from the union of all elements across all distributed databases.  This is precisely the $k$-OR problem and follows from Theorem \ref{thm:k-OR}, since the existence of each element in each distributed data point can be signified by a bit, and the bit-wise OR represents the set of distinct elements.

\begin{theorem}\label{thm:distinct-element}
For a set of $k$ distributed databases, each containing a subset of $n$ elements, it requires $\Omega(nk)$ communication total between the databases to list the set of distinct elements with probability at least $2/3$.
\end{theorem}

%%%%%%%%%%%%%%%%%%%%%%%%%%%%%
\paragraph{$\eps$-Kernels.}
 Given a set of $n$ points $P \subset \b{R}^d$, the width in direction
 $u$ is denoted by
\[
\wid(P,u) = \left(\max_{p \in P} \langle p, u \rangle\right) - \left(\min_{p \in P} \langle p, u \rangle \right),
\]
where $\langle \cdot, \cdot \rangle$ is the standard inner product operation.
Then an $\eps$-kernel~\cite{AHV04,AHV07}  $K$ is a subset of $P$ so that for any direction $u$ we have
\[
\wid(P,u) - \wid(K,u)  \leq \eps \cdot \wid(P,u).
\]
An $\eps$-kernel $K$ approximates the convex hull of a point set $P$, such that if the convex hull of $K$ is expanded in any direction by an $\eps$-factor it contains $P$.  As such, this coreset has proven useful in many applications in computational geometry such as approximating the diameter and smallest enclosing annulus of point sets~\cite{AHV04,AHV07}.  It has been shown that $\eps$-kernels may require $\Omega(1/\eps^{(d-1)/2})$ points (on a $(d-1)$-sphere in $\b{R}^d$) and can always be constructed of size $O(1/\eps^{(d-1)/2})$ in time $O(n + 1/\eps^{d-3/2})$~\cite{YAPV04,Cha06}.

We note a couple of other properties about $\eps$-kernels.
Composibility: If $K_1, \ldots, K_k$ are $\eps$-kernels of $P_1, \ldots, P_k$, respectively, then $K = \bigcup_{i=1}^k K_i$ is an $\eps$-kernel of $P = \bigcup_{i=1}^k P_i$.
Transitivity: If $K_1$ is an $\eps_1$-kernel of $P$ and $K_2$ is an $\eps_2$-kernel of $K_1$, then $K_2$ is an $(\eps_1 + \eps_2)$-kernel of $P$.
Thus it is easy to see that each site $i$ can simply send an $(\eps/2)$-kernel $K_i$ of its data of size $n_\eps = O(1/\eps^{(d-1)/2})$ to the server, and the server can then create and $(\eps/2)$-kernel of $\bigcup_{i=1}^k K_i$ of size $O(1/\eps^{(d-1)/2})$.  This is asymptotically the optimal size for and $\eps$-kernel of the full distributed data set.  We next show that this procedure is also asymptotically optimal in regards to communication.

\begin{theorem}\label{thm:eKerLB}
For a distributed set of $k$ sites, it requires $\Omega(k/\eps^{(d-1)/2})$ communication total between the sites and the server for the server to create an $\eps$-kernel of the distributed data with probability at least $2/3$.
\end{theorem}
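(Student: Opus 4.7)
The plan is to reduce from the $k$-\OR lower bound of Theorem~\ref{thm:k-OR}. First I would fix a set $Q=\{q_1,\ldots,q_{n_\eps}\}\subset\b{R}^d$ of $n_\eps=\Theta(1/\eps^{(d-1)/2})$ points on a small spherical cap, packed at angular separation $C\sqrt{\eps}$ for a sufficiently large constant $C$, together with an anchor point $q_0$ placed antipodally. Taking $u_j$ to be the outward normal direction at $q_j$, the choice of $C$ yields a margin property of the form $\langle q_j,u_j\rangle-\max_{i\neq j}\langle q_i,u_j\rangle > 3\eps\cdot\wid(Q\cup\{q_0\},u_j)$. This is essentially the standard configuration witnessing that $\eps$-kernels may need $\Omega(1/\eps^{(d-1)/2})$ points~\cite{AHV04,YAPV04}.

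Next, take the hard distribution $\mu'$ of Theorem~\ref{thm:k-OR} with bit-length set to $n_\eps$, and map each player's vector $I_i\in\{0,1\}^{n_\eps}$ to the point set $P_i=\{q_0\}\cup\{q_j:I_{i,j}=1\}$. Running any alleged $\eps$-kernel protocol on $P=\bigcup_{i=1}^k P_i$ produces some $K\subseteq P$. From $(1-\eps)\wid(P,u_j)\le\wid(K,u_j)\le\wid(P,u_j)$ combined with the margin property, the server decides whether $q_j\in P$ by thresholding $\wid(K,u_j)$ halfway between the two possible values of $\wid(P,u_j)$; the conjunction of these $n_\eps$ decisions recovers the coordinate-wise \OR of $I_1,\ldots,I_k$.

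Hence any randomized protocol that computes an $\eps$-kernel with error at most $1/3$ also solves $k$-\OR on $\mu'$ with error at most $1/3$, and a constant-factor amplification (parallel repetition of the kernel protocol plus majority on the decoded \OR answer) pushes this below the $1/800$ threshold of Theorem~\ref{thm:k-OR}. The resulting communication lower bound is $\Omega(n_\eps\cdot k)=\Omega(k/\eps^{(d-1)/2})$, valid in the regime $n_\eps\ge 3200k$ (equivalently $\eps^{(d-1)/2}\le 1/(3200 k)$) required by that theorem, which is exactly the interesting regime for the upper bound.

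The main obstacle is the quantitative margin property: one must pick the packing constant $C$ large enough that for \emph{every} subset $P\subseteq Q\cup\{q_0\}$ containing $q_0$, deleting $q_j$ from $P$ shrinks $\wid(P,u_j)$ by a factor strictly greater than $\eps$ (not merely by $\Theta(\eps)$ as suggested by the naive cosine estimate on the sphere). This is a short spherical-cap trigonometry calculation, but it is the one step where constants genuinely matter; everything else is a textbook symmetrization-style reduction wrapped around $k$-\OR.
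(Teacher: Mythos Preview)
Your proposal is correct and follows the same high-level reduction as the paper: encode a $k$-\OR instance of bit-length $n_\eps=\Theta(1/\eps^{(d-1)/2})$ into a distributed $\eps$-kernel instance using the standard sphere-packing configuration that witnesses the $\Omega(1/\eps^{(d-1)/2})$ size lower bound, and then invoke Theorem~\ref{thm:k-OR}.

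The only real difference is the bit encoding. The paper gives every player a full set of $n_\eps$ points, one along each direction $u_j$, placing the $j$th point at radius $1$ if bit $j$ is $1$ and at radius $1-2\eps$ if bit $j$ is $0$; the $\eps$-kernel of the union must then witness, in each direction $u_j$, whether any player reached radius $1$. You instead encode a $1$-bit by the \emph{presence} of $q_j$ and a $0$-bit by its absence, and add an antipodal anchor $q_0$ to keep widths bounded away from zero. Both encodings work and both rest on exactly the spherical-cap margin calculation you flag as the one genuine technical step. Your version requires a touch more care because the reference width $\wid(P,u_j)$ can depend on which other $q_i$ happen to be present, whereas in the paper's radial encoding every direction always has a point at radius at least $1-2\eps$, so the two target width values are fixed constants. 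On the other hand, you are more careful than the paper about two points it glosses over: the amplification from error $1/3$ down to the $1/800$ needed by Theorem~\ref{thm:k-OR}, and the side condition $n_\eps\ge 3200k$.
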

\begin{proof}
We describe a construction which reduces $k$-OR to this problem, where each of $k$ players has $n_\eps = \Theta(1/\eps^{(d-1)/2})$ bits of information.  Theorem \ref{thm:k-OR} shows that this requires $\Omega(n_\eps k)$ communication.

We let each player have very similar data $P_i = \{p_{i,1}, \ldots, p_{i,n_\eps}\}$, each player's data points lie in a unit ball $B = \{q \in \b{R}^d \mid \|q\| \leq 1\}$.  For each player, their $n_\eps$ points are in similar position.  Each player's $j$th point $p_{i,j}$ is along the same direction $u_j$, and its magnitude is either $\|p_{i,j}\| = 1$ or $\|p_{i,j}\| = 1-2\eps$.  Furthermore, the set of directions $U = \{u_i\}$ are well-distributed such that for any player $i$, and any point $p_{i,j}$ that $P_i \setminus p_{i,j}$ is not an $\eps$-kernel of $P_i$; that is, the only $\eps$-kernel is the full set.  The existences of such a set follows from the known lower bound construction for size of an $\eps$-kernel.

We now claim that the $k$-OR problem where each player has $n_\eps$ bits can be solved by solving the distributed $\eps$-kernel problem under this construction.  Consider any instance of $k$-OR, and translate to the $\eps$-kernel problem as follows.  Let the $j$th point $p_{i,j}$ of the $i$th player have norm $\|p_{i,j}\| = 1$ when $j$th bit of the player is $1$, and have norm $\|p_{i,j}\| = 1-2\eps$ if the $j$th bit is $0$.
By construction, an $\eps$-kernel of the full set must acknowledge (and contain) the $j$th point from some player that has such a point with norm $1$, if one exists.  Thus the full $\eps$-kernel encodes the solution to the $k$-OR problem: it must have $n_\eps$ points and, independently, the $j$th point has norm $1$ if the $j$th OR bit is $1$, and has norm $1-2\eps$ if the $j$th OR bit is $0$.  
\end{proof}

%%%%%%%%%%%%%%%%%%%%%%%%%%%%%

%In Section~\ref{sec:MAJ}, we have shown that our lower bound problem almost matches the current best upper bound~\cite{HYLC11} for the heavy hitter problem. \qin{Shall we say this here or in Section~\ref{sec:MAJ}?} In this section, we will mainly focus on the problem of computing $\eps$-kernels in the server-site model.

\paragraph{Heavy Hitters.}
Given a multi-set $S$ that consists of $n$ elements, a threshold parameter $\phi$, and an error parameter $\eps$, the \emph{approximate heavy hitters} problem asks for a set of elements which contains all elements that occur at least $\phi k$ times in $S$ and contains no elements that occur fewer than $\phi k (1-\eps)$ times in $S$.  On a static non-distributed data set this can easily be done with sorting.  This problem has been famously studied in the streaming literature where the Misra-Gries~\cite{MG82} and SpaceSaving~\cite{MAA06} summaries can solve the problem in optimal space.  In the distributed setting the best known algorithms use random sampling of the indices and require either $O((1/\eps^2)n \log n)$ or $O(k + \sqrt{k} n/\eps \cdot \log n)$ communication to guarantee a correct set with constant probability~\cite{HYLC11}.  
We will prove a lower bound of $\Omega(n/\eps)$ here. After our work Woodruff and Zhang~\cite{WZ12} showed a lower bound of $\Omega(\min\{n/\eps^2, \sqrt{k}n/\eps\})$ (translating to our setting; their setting is a bit different), which is tight up to a $\log$ factor. 
% and leave tightening this bound as an open problem.
%\jeff{is this now closed?}

We now present a specific formulation of the approximate heavy hitters problem as $(k,\phi,\eps)$-\HH as follows.  Consider $k$ players, each with a bit sequence (either 0 or 1) of length $n$ where each coordinate represents an element.  The goal is to answer YES for each index with at least $\phi k$ elements, NO for each index with no more than $\phi k(1-\eps)$ elements, and either YES or NO for any count in between.
%
%We can also show a lower bound for a multi-party heavy-hitters problem.  In this setting we want to distinguish the number of bits at each index from at least $k \phi$ (YES) with at most $k \phi (1-\eps)$ (NO).  Any number in between we are free to answer YES or NO.  On a multiparty setting where $k$ players each have either 0 or 1 bit at each index, we label this problem $(k,\eps)$-\HH.
%
The reduction is based on a distribution $\tau_{\phi,\eps}$ where independently each index has either $\phi k$ or $\phi k(1-\eps)$ 1 bits, each with probability $1/2$.  In the reduction the players are grouped into sets of $k \eps$ players each, and all grouped players for each index are either given a 1 bit or all players are given a 0 bit.  These 1 bits are distributed randomly among the $1/\eps$ groups.  The proof then uses Corollary \ref{cor:k-MAJ}.

\begin{theorem}
$D_{\tau_{\phi,\eps}}^{1/6}((k,\phi,\eps)$-$\HH) = \Omega(n/\eps)$.
\end{theorem}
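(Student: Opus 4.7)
The plan is to derive the bound by a reduction from $(m,\phi')$-\MAJ with $m=1/\eps$ super-players and a suitable constant $\phi'\in(0,1)$, invoking Corollary~\ref{cor:k-MAJ}. The structural observation driving the proof is that $\tau_{\phi,\eps}$ is built around a fixed partition of the $k$ HH-players into $m=1/\eps$ groups of $k\eps$ players each, with every member of a group sharing the same bits at every index. Consequently each group behaves as a single super-player, and the distribution $\tau_{\phi,\eps}$ on HH-inputs is just a \MAJ-type distribution on $m$ super-players transported back to the $k$-player HH setting through the grouping map.

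I will pick $\phi'$ so that the one-super-player \MAJ threshold gap (i.e.\ $k\eps$ HH-bits) straddles the HH threshold band $[\phi k(1-\eps),\,\phi k]$. Writing out the requirements, I need $\lfloor \phi'/\eps\rfloor\in[\phi/\eps-1,\phi/\eps-\phi]$: the lower endpoint forces the ``majority $1$'' super-player count to correspond to at least $\phi k$ HH-bits (so a correct HH protocol must output YES), and the upper endpoint forces the ``majority $0$'' super-player count to correspond to at most $\phi k(1-\eps)$ HH-bits (so the HH protocol must output NO). This integer window has width $1-\phi>0$ for any $\phi<1$, so some integer always fits; a clean generic choice is $\phi'\approx \phi(1-\eps)$, which is a constant in $(0,1)$ and therefore permits Corollary~\ref{cor:k-MAJ} to give the bound $\Omega(nm)=\Omega(n/\eps)$ for $(m,\phi')$-\MAJ under $\tau_{\phi'}$.

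The reduction itself is now almost automatic. Given a \MAJ-input $(v_1,\ldots,v_m)$ drawn from $\tau_{\phi'}$, I duplicate each $v_i$ across the $k\eps$ HH-players in group $i$; by construction, the resulting HH-input is distributed exactly as $\tau_{\phi,\eps}$. The hypothesised HH protocol $\mathcal P$ is then simulated super-player-by-super-player: each MAJ-super-player locally simulates all $k\eps$ HH-players in its group, so all intra-group messages are free, and every inter-group HH-message translates one-for-one into a MAJ-message. Thus the simulated MAJ-protocol uses at most $C$ bits, where $C$ is the HH-communication. Reading off $\mathcal P$'s YES/NO output on each index directly recovers the correct \MAJ majority bit with error at most $1/6$.

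Putting it together: if $C=o(n/\eps)$ with error $1/6$, we would obtain a $(m,\phi')$-\MAJ protocol on $\tau_{\phi'}$ with $o(nm)$ communication and error $1/6$, contradicting Corollary~\ref{cor:k-MAJ}. Hence $D_{\tau_{\phi,\eps}}^{1/6}((k,\phi,\eps)\text{-}\HH)=\Omega(n/\eps)$. The main obstacle I expect is the rounding/integrality bookkeeping in selecting $\phi'$ and in verifying that the induced HH-distribution matches $\tau_{\phi,\eps}$ exactly after discretization; this is a routine calculation rather than a conceptual hurdle, and with $\phi$ and $\eps$ interpreted as suitably rational parameters it reduces to checking a single integer inequality.
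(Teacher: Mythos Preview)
Your approach is essentially identical to the paper's: both group the $k$ HH-players into $1/\eps$ super-players of size $k\eps$, identify the resulting super-player problem with $(1/\eps,\phi)$-\MAJ, and invoke Corollary~\ref{cor:k-MAJ}. The paper simply records the inequality $D_{\tau_{\phi,\eps}}^{1/6}((1/\eps,\phi)\text{-}\HH) \leq D_{\tau_{\phi,\eps}}^{1/6}((k,\phi,\eps)\text{-}\HH)$ (super-players simulate their group members for free) and then equates $(1/\eps,\phi)$-\HH with $(1/\eps,\phi)$-\MAJ directly, glossing over the same integrality bookkeeping you flag at the end.
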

\begin{proof}
To lowerbound the communication for $(k,\phi,\eps)$-$\HH$ we first show another problem is hard: $(1/\eps,\phi)$-$\HH$ (assume $1/\eps$ is an integer).  Here there are only $1/\eps$ players, and each player at each index has a count of $0$ or $k\eps$, and we again want to distinguish between total counts of at least $k \phi$ (YES) and at most $k \phi(1-\eps)$ (NO).  By distribution $\tau_{\phi,\eps}$ each index has a total of either $k \phi$ or $k \phi(1-\eps)$ exactly.  And then we distribute these bits to players so each player has precisely either $0$ or $k\eps$ at each index.  When $k$ is odd, this is
precisely the $(1/\eps,\phi)$-MAJ problem, which by Corollary \ref{cor:k-MAJ} takes $\Omega(n/\eps)$ communication.

Now it is easy to see that $D_{\tau_{\phi,\eps}}^{1/6}((1/\eps,\phi)$-$\HH) \leq D_{\tau_{\phi,\eps}}^{1/6}((k,\phi,\eps)$-$\HH)$, since the former on the same input allows $(1/\eps)$ sets of $k\eps$ players to talk to each other at no cost.  
\end{proof}

\section{Concluding Remarks}
\label{sec:conclusion}

In this paper we have introduced the symmetrization technique, and have shown how to use it to prove lower bounds for $k$-player communication games. This technique seems widely applicable, and we expect future work to find further uses.

\subsection{A Brief Comparison to the icost Method}
In this section we make a brief comparison between our symmetrization method and the celebrated {\em icost} method by~\cite{BYJKS02}. Readers who are familiar with the icost method may notice that the $k$-\XOR, $k$-\MAJ and blackboard $k$-\OR/\AND problems discussed in this paper can also be handled by the icost method. However, for problems whose complexities are different in the blackboard model and the message-passing model, e.g., $k$-\OR and $k$-\CONN, the icost method cannot be used to obtain tight lower bounds in the message-passing/coordinator model, while the symmetrization method still applies.

If we view the input to $k$ players as a matrix with players as rows each having an $n$-bit input, the icost method first ``divides" the whole problem to $n$ copies of primitive problems column-wise, and then analyzes a single primitive problem. While the symmetrization method first reduces the size of a problem in the row space, that is, it first reduce a $k$-player problem to a $2$-player problem, and then analyze the $2$-player problem. We can certainly use the icost method again when analyzing the resulting $2$-player problem, which gives us an elegant way to combine these two techniques.

We notice that after the conference version of this paper, Braverman et al.~\cite{BEOPV13} and Huang et al.~\cite{HRVZ13} independently developed two new (and different) definitions for icost in the coordinator model, and used them to prove some tight lower bounds in the coordinator model.

\subsection{Limitations and Future Directions}
\label{sec:limitations}
The symmetrization technique also has several limitations, which we wish to discuss here.

Firstly, there are problems that might be impossible to lower bound using symmetrization. Consider for example the $k$-player disjointness problem, where each player gets a subset of $\{1,\ldots,n\}$, and the goal is to decide whether there is an element that appears in all of the sets. This problem looks to be easier than the coordinate-wise AND problem. But in the conference version of this paper we conjectured that this problem has a communication lower bound of $\Omega(nk)$ in the coordinator model as well. However, it seems impossible to prove this lower bound using symmetrization, for the following reason. Suppose we give Alice the input of a randomly-chosen player $p_i$, and give Bob the inputs of all the other players. It seems that for any symmetric input distribution, this problem can be solved using $O((n \log k) / k)$ bits in expectation, which is much lower than the $\Omega(nk)$ lower bound we are aiming for. 
%It is not out of the question that some variant of the symmetrization technique can be used to get the $\Omega(nk)$ lower bound, but we do not know how to do it, and it might well be impossible. We leave this problem as an intriguing open question.
Recently Braverman et al.~\cite{BEOPV13} confirmed our conjecture that the $k$-player disjointness problem has a lower bound $\Omega(nk)$ in the coordinator model, using a very different method via information complexity.

The second limitation is that symmetrization seems to require proving distributional lower bounds for $2$-player problems, over somewhat-convoluted distributions. This presents some difficulty for the researcher, who needs to start proving lower bounds from scratch and cannot use the literature, since lower bounds in the literature are proved for other distributions. Yao's minimax principle cannot be used, since it only guarantees that there is \emph{some} hard distribution, but it does not guarantee anything for the distribution of interest. This is often only a methodical difficulty, since it is often easy to get convinced that the distribution of interest is indeed hard, but a proof of this must still be found, which is a tedious task. It would be useful if there is some way to circumvent this difficulty, for example by finding a way that standard randomized lower bounds can be used.

The third limitation is that in order to use symmetrization, one needs to find a hard distribution for the $k$-player problem which is symmetric. This is usually impossible when the problem itself is not symmetric, i.e.\ when the players have different roles. For example, one could envision a problem where some of the players get as input elements of some group and the rest of the players get as input integers. However, note that for such problems, symmetrization can still be useful in a somewhat-generalized version. For example, suppose there are two sets of players: in set $P$, the players get group elements, and in set $P'$, the players get integers. Assume each of the sets contains exactly $k/2$ players. To use symmetrization, we would try to find a hard distribution that is symmetric inside of $P$ and also inside of $P'$; namely, a distribution where permuting the players inside $P$ has no effect on the distribution, and similarly for permuting the players inside $P'$. Then, to use symmetrization we can have Alice simulate two random players, $p_i$ and $p_j$, where $p_i$ is from $P$ and $p_j$ is from $P'$; Bob will simulate all the rest of the players. Now symmetrization can be applied. If, alternatively, the set $P$ contained just $3$ players and $P'$ contained $k-3$ players, we can have Alice simulate one of the players in $P'$, Bob can simulate the rest of the players in $P'$, and either Alice or Bob can play the three players from $P$. As can be seen, with a suitable choice of distribution, it should still be possible to apply symmetrization to problems that exhibit some amount of symmetry.

The main topic for future work seems to be to find more setting and problems where symmetrization can prove useful. Recently this technique has found applications in several other statistical, numerical linear algebra and graph problems in the message-passing/coordinator model~\cite{WZ12,WZ13,HRVZ13,WZ14,LSWW14}. We believe it has the potential to be a widely useful tool.

\subsection*{Acknowledgements}
We would like to thank Andrew McGregor for helpful discussions, and for suggesting to study the connectivity problem. We also thank David Woodruff for some insightful comments.

\bibliographystyle{plain}
\bibliography{paper}

\appendix
\section{Omitted Proof for 2-\BITS}
\label{sec:2-BITS-proof}

\noindent\textbf{Lemma \ref{lem:2-BITS} (restated).}  \emph{$\ED_{\zeta_\rho}^{1/3}(\textrm{2-\BITS}) = \Omega(n \rho \log(1/\rho))$.
}

\begin{proof}
Here we will make use of several simple tools from information theory.
Given a random variable $X$ drawn from a distribution $\mu$, we can measure the amount of randomness in $X$ by its entropy $H(X) = - \sum_{x} \mu(x) \log_2 \mu(x)$.  The conditional entropy $H(X \mid Y) = H(X Y) - H(Y)$ describes the amount of entropy in $X$, given that $Y$ exists. The mutual information $I(X : Y) = H(X) + H(Y) - H(XY)$
measures the randomness in both random variables $X$ and $Y$.

Let $\cal P$ be any valid communication protocol. Let $X$ be Alice's (random) input vector. Let $Y$ be Bob's output as the Alice's vector he learns after the communication. Let $\Pi$ be the transcript of $\cal P$. Let $\eps = 1/3$ be the error bound allows by Bob.
First, since after running $\cal P$, with probability at least $1- \eps$, we have $Y = X$, thus Bob knows $X$. Therefore
\[
I(X:Y\ |\ \Pi) \le \eps H(X).
\]
Consequently,
\begin{eqnarray*}
%\hspace{-1.5cm} \mbox{Consequently,} \quad \quad
\eps H(X) &\ge& I(X:Y\ |\ \Pi) \\
&=& H(X\ |\ \Pi) +  H(Y\ |\ \Pi) - H(XY\ |\ \Pi)\\
&=& H(X\Pi) - H(\Pi) + H(Y\Pi) - H(\Pi) \\ & &- (H(XY\Pi) - H(\Pi)) \\
&=& H(X\Pi) + H(Y\Pi) - H(XY\Pi) - H(\Pi) \\
&=& I(X\Pi:Y\Pi) - H(\Pi) \\
&\ge& (1 - \eps) H(X\Pi) - H(\Pi) \\
&\ge& (1 - \eps) H(X) - H(\Pi)
\end{eqnarray*}
Therefore, $\E[\abs{\Pi}] \ge H(\Pi) \ge (1 - 2\eps) H(X) \ge \Omega(n
H(\rho)) \ge \Omega(n\rho \log (1/\rho))$.  
\end{proof}

%%%%%%%%%%%%%%%%%%%%%%%%%%%%%%
\section{Omitted Proofs for the Biased $2$-party Set Disjointness}
\label{sec:2-DISJ-proof}
\noindent\textbf{Lemma \ref{lem:2-DISJ} (restated).}  \emph{When $\mu$ has $|x \cap y| = 1$ with probability $1/t$ then $\ED^{1/100t}_{\mu}(\mbox{2-\DISJ}) = \Omega(n)$.
}

The proof is based on \cite{Raz90}.  Before giving the proof, we first
introduce some notations and a key technical lemma. Define $$A =
\{(x,y)\ :\ (\mu(x,y) > 0) \wedge (x \cap y = \emptyset)\}$$ and
$$B = \{(x,y)\ :\ (\mu(x,y) > 0) \wedge (x \cap y \neq \emptyset)\}.$$ Thus
$\mu(A) = 1 - 1/t$ and $\mu(B) = 1/t$. We need the following key lemma, which is an easy extension of the main lemma in Razbarov~\cite{Raz90} by rescaling the measures on the YES and NO instances.

%The proof will be given shortly.
\begin{lemma}
\label{lem:mono-rect}{\cite{Raz90}}
Let $A, B, \mu$ be defined as above. Let $R = C \times D$ be any rectangle
in the communication protocol. Then we have
$\mu(B \cap R) \ge 1/40t \cdot \mu(A \cap R) - 2^{-0.01n}.$
\end{lemma}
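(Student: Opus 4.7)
The plan is to reduce this lemma to the constant-$t$ version (specifically the version implicit in Razborov's original argument~\cite{Raz90,KS92}), and show that the dependence on $t$ enters only through the overall normalization. Let $\mu_A$ (resp.\ $\mu_B$) be the uniform distribution on $A$ (resp.\ $B$). Razborov's argument, phrased in terms of these conditional distributions, proves the existence of an absolute constant $c_0$ (one can extract $c_0 \le 40$ from the proof) such that for every combinatorial rectangle $R$,
$$\mu_A(R) \;\le\; c_0\, \mu_B(R) \;+\; 2^{-\Omega(n)}.$$
Because $\mu(A \cap R) = (1-1/t)\mu_A(R)$ and $\mu(B \cap R) = (1/t)\mu_B(R)$, rearranging immediately yields
$$\mu(B\cap R) \;=\; \tfrac{1}{t}\mu_B(R) \;\ge\; \tfrac{1}{c_0 t}\,\mu_A(R) - \tfrac{1}{t} 2^{-\Omega(n)} \;\ge\; \tfrac{1}{c_0 t}\,\mu(A\cap R) - 2^{-0.01 n},$$
giving the stated bound with constant $1/(40 t)$. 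So the main thing to check is that the universal constant $c_0$ from Razborov's corruption bound is indeed independent of $t$.

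To do that, I would rederive the corruption bound in a form that makes the independence from $t$ explicit. First, use a common generative process for samples from both $A$ and $B$: pick a ``support'' set $M \subseteq [n]$ of cardinality roughly $2\ell$ uniformly at random, pick a random $m \in \{0,1\}$ (representing $|x\cap y|$), and then pick a uniformly random way to write $M$ as $x \cup y$ with $|x| = |y| = \ell$ and $|x \cap y| = m$. Conditioning on $m = 0$ gives $\mu_A$, conditioning on $m = 1$ gives $\mu_B$. Fix a rectangle $R = C \times D$ and, for each support set $M$, let $\alpha_M$ and $\beta_M$ be the conditional probabilities $\Pr_A[R \mid x\cup y = M]$ and $\Pr_B[R \mid x\cup y = M]$ respectively. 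It suffices to show that for all but an exponentially small $\mu_A$-measure of $M$'s, one has $\beta_M \ge \alpha_M / c_0$.

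The heart of the argument is therefore a combinatorial claim about \emph{fixed} support $M$: for any $C',D' \subseteq \binom{M}{\ell}$, the density of disjoint pairs in $C' \times D'$ cannot greatly exceed the density of singleton-intersecting pairs in $C' \times D'$, up to a constant factor. This follows by analyzing the two-variable distribution $(x, y)$ drawn uniformly from $C' \times D'$: the intersection size $|x \cap y|$ has mean $\Theta(\ell^2/|M|) = \Theta(1)$ and strong concentration, so the probability masses at $|x \cap y| = 0$ and $|x \cap y| = 1$ are comparable up to a constant factor. To upgrade this pointwise claim to the desired averaged claim, I would apply it after noting that rectangles restricted to a fixed support still have a product form, and that the ``bad'' $M$'s (where the density is wildly skewed toward disjoint) have exponentially small total $\mu_A$-measure by a Chernoff bound on $|C'|, |D'|$ across different supports.

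The main obstacle I anticipate is extracting a clean, explicit constant $c_0$ (matching the $1/40$ appearing in the lemma) from the anticoncentration step. The qualitative statement that the two intersection events have comparable probabilities is clear, but tracking constants through the support-averaging and the Chernoff slack requires some care, particularly at the boundary where $\ell = (n+1)/4$. If this constant turns out to be inconvenient, I would absorb the discrepancy into the $1/40$ factor (the statement is robust to the exact constant), and if necessary, slightly adjust the $0.01 n$ exponent in the additive error to accommodate the Chernoff tail.
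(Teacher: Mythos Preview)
Your first paragraph is correct and in fact cleaner than what the paper does. Since the conditional distributions $\mu_A$ and $\mu_B$ are uniform on $A$ and $B$ and do not depend on $t$, Razborov's corruption bound $\mu_A(R)\le c_0\,\mu_B(R)+2^{-\Omega(n)}$ is indeed a $t$-free statement, and the lemma follows by the normalization you wrote. The paper takes a less direct route: it rebuilds Razborov's entire partition argument with the parameter $1/\sqrt{t}$ baked into the generative process (each of Alice and Bob includes the special element $i$ with probability $1/\sqrt{t}$), and then tracks how the $t$-dependence propagates through the ``bad-$T$'' lemma and the averaging step. This yields a slightly better constant for large $t$ (the factor $1-\tfrac{2/5}{1-1/\sqrt t}$ tends to $3/5$), but your black-box reduction is perfectly adequate for the stated $1/(40t)$; you may need to relax the constant to something like $1/(45t)$, which is harmless downstream.

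Where your proposal goes wrong is the re-derivation in the second and third paragraphs. Conditioning on the support $M=x\cup y$ does \emph{not} couple $\mu_A$ and $\mu_B$: disjoint pairs have $|x\cup y|=2\ell$ while singleton-intersecting pairs have $|x\cup y|=2\ell-1$, so for any fixed $M$ one of $\alpha_M,\beta_M$ is simply undefined. Moreover, the anticoncentration heuristic (``$|x\cap y|$ has mean $\Theta(1)$ and strong concentration for $x\in C',\,y\in D'$'') fails for arbitrary $C',D'$; take $C'=D'=\{x_0\}$. Razborov's proof avoids both issues by conditioning not on the support but on a random partition $T=(T_0,T_1,\{i\})$ with a designated special element; this is exactly the device the paper uses. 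So either cite Razborov directly (your first paragraph already suffices) or, if you want to re-derive, switch to the partition-$T$ conditioning rather than the support-$M$ conditioning.
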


%\begin{proof}
%See Appendix~\ref{sec:2-DISJ-proof}.
%\end{proof}

%\qin{start from here, can be put into the appendix}

%\qin{Now back to main text}

\begin{proof}{(for Lemma~\ref{lem:2-DISJ})}
Let ${\cal R} = \{R_1, \ldots, R_t\}$ be the minimal set of disjoint rectangles in
which the protocol outputs ``$1$", i.e, $x \cap y = \emptyset$. Imagine
that we have a binary decision tree built on top of these rectangles. If we
can show that there exists ${\cal O} \subseteq \cal R$ such that
$\mu(\bigcup_{R_i \in \cal O} R_i) \ge 0.5 \cdot \mu(\bigcup_{R_i \in \cal
  R} R_i)$ and each of $R_i \in {\cal O}$ lies on a depth at least $0.005n$
in the binary decision tree, then we are done. Since $\mu(\bigcup_{R_i \in
  \cal O} R_i) \ge 0.5 \cdot \mu(\bigcup_{R_i \in \cal R} R_i) \ge 0.5
\cdot (\mu(A) - 1/100t) = \Omega(1)$ and querying inputs in each rectangle
in $\cal O$ costs $\Omega(n)$ bits.

We prove this by contradiction. Suppose that there exists ${\cal O'}
\subseteq {\cal R}$ such that $\mu(\bigcup_{R_i \in \cal O'} R_i) > 0.5
\cdot \mu(\bigcup_{R_i \in \cal R} R_i)$ and each of $R_i \in {\cal O'}$
lies on a depth less than $0.005n$ in the binary decision tree. We have the
following two facts.
\begin{enumerate}
\item There are at most $2^{0.005n}$ disjoint rectangles that lie on depths
  less than $0.005n$, i.e., $\abs{\cal O'} \le 2^{0.005n}$.
\item $\mu\left(\bigcup_{R_i \in {\cal O'}}(R_i \cap A)\right) > 0.5 -
  1/100t$.
\end{enumerate}
Combining the two facts with Lemma~\ref{lem:mono-rect} we reach the following contradiction of our error bound.
\begin{align*}
\mu\left(\bigcup_{i=1}^t(R_i \cap B)\right)
\ge& \mu\left(\bigcup_{R_i \in {\cal O'}}(R_i \cap B)\right) \\
\ge& \sum_{R_i \in {\cal
      O'}}(1/40t \cdot \mu(R_i \cap A) - 2^{-0.01n}) \\
>& 1/40t \cdot (0.5 - 1/100t) - 2^{0.005n} \cdot 2^{-0.01n} \\
>& 1/100t \ . \;  
\end{align*}
\end{proof}

%%%%%%%%%%%%%%%%%%%%%%%%%%%%%%%%%%%%%%%%%%%%%%%%%%%%%%%%%%%
%%%%%%%%%%%%%%%%%%%%%%%%%%%%%%%%%%%%%%%%%%%%%%%%%%%%%%%%%%%
\section{Omitted Proofs Graph Connectivity}
\label{sec:CONN-proof-appendix}

We provide here a full proof for the probability of the event $\xi_1$, that both subset of the graph $L$ and $R$ are connected.
\smallskip

\noindent\textbf{Lemma \ref{lem:connected} (restated).}
\emph{$\xi_1$ happens with probability at least $1-1/2n$ when $k \geq 68 \ln n + 1$.}

\begin{proof}
First note that by our construction, both $\abs{L}$ and $\abs{R}$ are $\Omega(n)$ with high probability. To locally simplify notation, we consider a graph $(V,E)$ of $n$ nodes
where edges are drawn in $(k - 1) \ge 68\ln n$ rounds, and each round $n/4$ disjoint
edges are added to the graph.  If $(V,E)$ is connected with
probability $(1-1/4n)$, then by union bound over $\cup_{j=2}^k I_j
\bigcap E_L$ and $\cup_{j=2}^k I_j \bigcap E_R$, $\xi_1$ is true with
probability $(1-1/2n)$. The proof follows four steps.

%The proof follows four steps.  \begin{itemize} \denselist \item[(S1)]
%We show that all points have degree at least $8\ln n$ with
%probability at least $1 - 1/12n$; this uses the first $28 \ln n$
%rounds.  \item[(S2)] We show (conditioned on (S1)) that any subset $S
%\subset V$ of $h < n/10$ points is connected to at least $\min\{h \ln
%n, n/10\}$ distinct points in $V \setminus S$, with probability at
%least $1-1/n^2$.  \item[(S3)] We can iterate (S2) $\ln n$ times to
%show that there must be a single connected component $S_G$ of size at
%least $n/10$, with probability at least $1-1/12n$.  \item[(S4)] We
%can show (conditioned on (S3) and using the last $40 \ln n$ rounds)
%that all points are connected to $S_G$ with probability at least
%$1-1/12n$.  \end{itemize}

\begin{enumerate}
\item[(S1):]  \emph{All points have degree at least $8\ln n$.}
Since for each of $k$ rounds each point's degree increases by $1$ with probability $1/2$, then the expected degree of each point is $14\log n$ after the first $28 \ln n$ rounds.  A Chernoff-Hoeffding bound says that the probability that a point has degree less than $8\ln n$ is at most $2 \exp(-2(6 \ln n)^2/$ $(14 \ln n)) \leq 2 \exp(-5 \ln n) \leq 2/n^5 \leq 1/12n^2$.  Then by the union bound, this holds for none of the $n$ points with probability at least $1-1/12n$.

\item[(S2):] \emph{Conditioned on (S1), any subset $S \subset V$ of $h < n/10$ points is connected to at least $\min\{h \ln n, n/10\}$ distinct points in $V \setminus S$.}
At least $9n/10$ points are outside of $S$, so each point in $S$ expects to be connected at least $(9/10) 8 \ln n \geq 7 \ln n$ times to a point outside of $S$.  Each of these edges occur in different rounds, so they are independent.  Thus we can apply a Chernoff-Hoeffding bound to say the probability that the number of edges outside of $S$ for any point is less than $3 \ln n$ is at most $2 \exp(-2 (4 \ln n)^2/(8 \ln n)) = 2 \exp(-4 \ln n) = 2/n^4$.  Thus the probability that no point in $S$ has fewer than $\ln n$ edges outside $S$ is (since $h < n/10$) at most $1/5n^3$.

If the $h \cdot 3 \ln n$ edges outside of $S$ (for all $h$ points) are drawn independently at random, then we need to bound the probability that these go to more than $n/10$ distinct points or $h \ln n$ distinct points.  Since the edges are drawn to favor going to distinct points in each round, it is sufficient to analyze the case where all of the edges are independent, which can only increase the chance they collide.
In either case $h \ln n < n/10$ or $h \ln n > n/10$ each time an edge is chosen (until $n/10$ vertices have been reached, in which case we can stop), $9/10$ of all possible vertices are outside the set of edges already connected to.  So if we select the $3 h \ln n$ edges one at a time, each event connects to a distinct points with probability at least $9/10$, so we expect at least $(9/10) (3 h \ln n) > 2 h \ln n$ distinct points.  Again by a Chernoff-Hoeffding bound, the probability that fewer than $h \ln n$ distinct points have been reached is at most $2 \exp(-2 (h \ln n)^2/(3 h n)) \leq 2 \exp(-(2/3) h \ln n) < 2 \cdot$ $\exp(-5 \ln n) \leq 2/n^5 \leq 1/5n^2$ (for $h \geq 8$).
Together the probability of these events not happening is at most $1/2n^2$.

\item[(S3):]  \emph{There is a single connected component $S_G$ of size at least $n/10$.}
Start with any single point, we know from Step 1, its degree is at least $8 \ln n$.  Then we can consider the set $S$ formed by these $h_1 = 8 \ln n$ points, and apply Step 2 to find another $h_1 \ln n = 8 \ln^2 n = h_2$ points; add these points to $S$.  The process iterates and at each round $h_i = 8 \ln^i n$, by growing only from $h_i$ the newly added points.  So, by round $i = \ln n$ the set $S = S_G$ has grown to at least size $n/10$.  Taking the union bound over these $\ln n$ rounds shows that this process fails with probability at most $1/12n$.
% (for $n \geq 17$).

\item[(S4):] \emph{All points in $V \setminus S_G$ are connected to $S_G$.}
Each round each point $p$ is connected to $S_G$ with probability at least $1/20$.  So by coupon collector's bound, using the last $40 \ln n$ rounds all points are connected after $2 \ln n$ sets of $20$ rounds with probability at least $1-1/12n$.
\end{enumerate}
By union bound, the probability of steps (S1), (S3), and (S4) are successful is at least $1-1/4n$, proving our claim. 
\end{proof}

\end{document}